\DeclareMathOperator*{\argmax}{arg\,max}
\DeclareMathOperator*{\rank}{rank}
\DeclareMathOperator*{\diag}{diag}
\DeclareMathOperator*{\vect}{vect}
\newcommand{\bbR}{\mathbb{R}} 
\newcommand{\bbE}{\mathbb{E}} 
\newcommand{\VAR}{\mathbb{V}\text{ar}} 
\newcommand{\COV}{\mathbb{C}\text{ov}} 
\newcommand{\bXi}{\bm{X}_i} 
\newcommand{\bCi}{\bm{C}_i} 
\newcommand{\cM}{\mathcal{M}} 
\newcommand{\Sp}{\mathcal{S}_P} 
\newcommand{\SpP}{\mathcal{S}_P^+} 
\newcommand{\SpPP}{\mathcal{S}_P^{++}}
\newcommand{\SrPP}{\mathcal{S}_R^{++}}
\newcommand{\SprP}{\mathcal{S}_{P, R}^+}
\newcommand{\iid}{{i.i.d.~}}
\newcommand{\ie}{{\em i.e.~}}
\newcommand{\wrt}{{\em w.r.t.~}}
\newtheorem{prop}{Proposition}
\title{Manifold-regression to predict from MEG/EEG brain signals without source
modeling}
\author{
	David Sabbagh 
	\thanks{Additional affiliation: Inserm, UMRS-942, Paris Diderot University, Paris, France}
	\thanks{Additional affiliation: Department of Anaesthesiology and Critical Care, Lariboisi\`ere
	 Hospital, Assistance Publique H\^opitaux de Paris, Paris, France} 
	\thanks{\texttt{dav.sabbagh@gmail.com}}
	, Pierre Ablin, Ga\"el Varoquaux, Alexandre Gramfort,
	Denis A. Engemann 
	\thanks{\texttt{denis-alexander.engemann@inria.fr}} \\
Université Paris-Saclay, Inria, CEA, Palaiseau, 91120, France \\
}
\begin{document}

\bibliographystyle{plain} 

\maketitle

\begin{abstract} 

Magnetoencephalography and electroencephalography (M/EEG) can reveal neuronal dynamics non-invasively in real-time and are therefore appreciated methods in medicine and neuroscience.
Recent advances in modeling brain-behavior relationships have highlighted the effectiveness of Riemannian geometry for summarizing the spatially correlated time-series from M/EEG in terms of their covariance. However, after artefact-suppression, M/EEG data is often rank deficient which limits the application of Riemannian concepts.
In this article, we focus on the task of \emph{regression} with rank-reduced covariance matrices. We study two Riemannian approaches that vectorize the M/EEG covariance between-sensors through projection into a tangent space. The Wasserstein distance readily applies to rank-reduced data but lacks affine-invariance.
This can be overcome by finding a common subspace in which the covariance matrices are full rank, enabling the affine-invariant geometric distance.
We investigated the implications of these two approaches in synthetic generative models, which allowed us to control estimation bias of a linear model for prediction.
We show that Wasserstein and geometric distances allow perfect out-of-sample prediction on the generative models.
We then evaluated the methods on real data with regard to their effectiveness in predicting age from M/EEG covariance matrices.
The findings suggest that the data-driven Riemannian methods outperform different sensor-space estimators and that they get close to the performance of biophysics-driven source-localization model that requires MRI acquisitions and tedious data processing.
Our study suggests that the proposed Riemannian methods can serve as fundamental building-blocks for automated large-scale analysis of M/EEG.

\end{abstract}

\section{Introduction}

Magnetoencephalography and electroencephalography (M/EEG) measure brain activity with millisecond precision from outside the head~\cite{hamalainen1993magnetoencephalography}. Both methods are non-invasive and expose rhythmic signals induced by coordinated neuronal firing with characteristic periodicity between minutes and milliseconds~\cite{Buzsaki:2017}. These so-called brain-rhythms can reveal cognitive processes as well as health status and are quantified in terms of the spatial distribution of the power spectrum over the sensor array that samples the electromagnetic fields around the head~\cite{baillet2017}.

Statistical learning from M/EEG commonly relies on covariance matrices estimated from band-pass filtered signals to capture the characteristic scale of the neuronal events of interest~\cite{blankertz-etal:08,grosse-wentrup:08,dmochowski2012correlated}. However, covariance matrices do not live in an Euclidean space but a Riemannian manifold. Fortunately, Riemannian geometry offers a principled mathematical approach to use standard linear learning algorithms such as logistic or ridge regression that work with Euclidean geometry. This is achieved by projecting the covariance matrices into a vector space equipped with an Euclidean metric, the tangent space.
The projection is defined by the Riemannian metric, for example the geometric affine-invariant metric~\cite{Bhatia:07} or the Wasserstein metric~\cite{bhatia2018bures}. As a result, the prediction error can be substantially reduced when learning from covariance matrices using Riemannian methods~\cite{yger-etal:17,congedo-etal:17}.

In  practice, M/EEG data is often provided in a rank deficient form by platform operators but also curators of public datasets~\cite{larson2013adding,lemon}. Its contamination with high-amplitude environmental electromagnetic artefacts often render aggressive offline-processing mandatory to yield intelligible signals. Commonly used tools for artefact-suppression project the signal linearly into a lower dimensional subspace that is hoped to predominantly contain brain
signals~\cite{taulu,uusitalo1997signal,makeig-etal:1995}. But this necessarily leads to inherently rank-deficient covariance matrices for which no affine-invariant distance is defined. One remedy may consist in using anatomically informed source localization techniques that can typically deal with rank deficiencies~\cite{engemann2015automated} and can be combined with source-level estimators of neuronal interactions~\cite{khan2018maturation}. However, such approaches require domain-specific expert knowledge, imply processing steps that are hard to automate (e.g. anatomical coregistration) and yields pipelines in which excessive amounts of preprocessing are not under control of the predictive model.

In this work, we focus on regression with rank-reduced covariance matrices. We propose two Riemannian methods for this problem. A first approach uses a Wasserstein metric that can handle rank-reduced matrices, yet is not affine-invariant. In a second approach, matrices are projected into a common subspace in which affine-invariance can be provided.
We show that both metrics can achieve perfect out-of-sample predictions in a synthetic generative model.
Based on the SPoC method \cite{dahne2014spoc}, we then present a supervised and computationally efficient approach to learn subspace projections informed by the target variable. Finally, we apply these models to the problem of inferring age from brain data~\cite{liem-etal:2017,khan2018maturation} on 595 MEG recordings from the Cambridge Center of Aging (Cam-CAN, http://cam-can.org) covering an age range from 18 to 88 years \cite{taylor2017cambridge}. We compare the data-driven Riemannian approaches to simpler methods that extract power estimates from the diagonal of the sensor-level covariance as well as the cortically constrained minimum norm estimates (MNE) which we use to project the covariance into a subspace defined by anatomical prior knowledge.

\paragraph{Notations}
We denote scalars $s \in \bbR$ with regular lowercase font, vectors $\bm{s} =
[s_{1}, \ldots, s_{N}] \in \bbR^{N}$ with bold lowercase font and matrices
$\bm{S} \in \bbR^{N \times M}$ with bold uppercase fonts.
$\bm{I}_N$ is the identity matrix of size $N$. $[\cdot]^{\top}$ represents vector or matrix transposition.
The Frobenius norm of a matrix will be denoted by $||\bm{M}||_{F}^{2} = \text{Tr}(\bm{M}\bm{M}^{\top}) = \sum |M_{ij}|^{2}$ with $\text{Tr}(\cdot)$ the trace operator. $\rank(\bm{M})$ is the rank of a matrix.
The $l_{2}$ norm of a vector $\bm{x}$ is denoted by $||\bm{x}||_{2}^{2} = \sum x_{i}^{2}$.
We denote by $\cM_P$ the space of $P \times P$ square real-valued matrices, 
$\Sp =\{\bm{M} \in \cM_{P}, \bm{M}^{\top} = \bm{M} \}$ the subspace of symmetric
matrices,
$\SpPP = \{ \bm{S} \in \Sp, \bm{x}^{\top} S \bm{x} > 0, \forall \bm{x} \in
\bbR^{P} \}$ the subspace of $P \times P$ symmetric positive definite matrices,
$\SpP = \{ \bm{S} \in \Sp, \bm{x}^{\top} S \bm{x} \geq 0, \forall \bm{x} \in
\bbR^{P} \}$ the subspace of $P \times P$ symmetric semi-definite positive (SPD)
matrices,
$\SprP = \{ \bm{S} \in \SpP, \rank(\bm{S}) = R\}$ the subspace of SPD matrices of fixed
rank R.
All matrices $\bm{S} \in \SpPP$ are full rank, invertible (with $\bm{S}^{-1} \in \SpPP$) and
diagonalizable with real strictly positive eigenvalues: $\bm{S} = \bm{U}
\bm{\Lambda} \bm{U}^{\top}$ with $\bm{U}$ an orthogonal matrix of eigenvectors
of $\bm{S}$ ($\bm{U} \bm{U}^{\top} = \bm{I}_P$) and $\bm{\Lambda} = \diag(\lambda_{1}, \ldots, \lambda_{n})$
the diagonal matrix of its eigenvalues $\lambda_{1} \geq \ldots \geq \lambda_{n} >
0$.  For a matrix $\bm{M}$, $\diag(\bm{M}) \in \bbR^P$ is its diagonal.
We also define the exponential and logarithm of a matrix: 
$\forall \bm{S} \in \SpPP, \log(\bm{S}) =
\bm{U}~\diag(\log(\lambda_{1}), \ldots, \log(\lambda_{n})) ~\bm{U}^{\top}
\in \Sp$,
and $\forall \bm{M} \in \Sp, \exp(\bm{M}) =\bm{U}~\diag(\exp(\lambda_{1}), \ldots, \exp(\lambda_{n}))
~\bm{U}^{\top} \in \SpPP$.
$\mathcal{N}(\mu, \sigma^2)$ denotes the normal (Gaussian) distribution of mean $\mu$ and variance $\sigma^2$. Finally, $\bbE_s[\bm{x}]$ represents the expectation and $\VAR_s[\bm{x}]$ the
variance of any random variable $\bm{x}$ \wrt their subscript $s$ when needed.

\paragraph{Background and M/EEG generative model}

MEG or EEG data measured on $P$ channels are multivariate signals $\bm{x}(t) \in \bbR^{P}$.
For each subject $i=1 \dots N$, the data are
a matrix $\bXi \in \bbR^{P \times T}$ where $T$ is the number of time samples.
For the sake of simplicity, we assume that $T$ is the same for each subject,
although it is not required by the following method.
The \emph{linear instantaneous mixing model} is a valid generative model for
M/EEG data due to the linearity of Maxwell's equations~\cite{hamalainen1993magnetoencephalography}.
Assuming the signal originates from $Q<P$ locations in the brain, at any time $t$, the measured signal vector of subject $i=1 \dots N$ is a linear combination of the $Q$ \emph{source patterns} $\bm{a}^s_j \in \bbR^P$, $j=1 \dots Q$:
\begin{equation}
    \label{eq:generativemodel}
    \bm{x}_i(t) = \bm{A}^s~\bm{s}_i(t) + \bm{n}_i(t) \enspace ,
\end{equation}
where the patterns form the time and subject-independent
source \emph{mixing matrix} $\bm{A}^s = [\bm{a}^s_{1}, \ldots, \bm{a}^s_{Q}] \in \bbR^{P \times Q}$, $\bm{s}_i(t) \in \bbR^{Q}$ is the \emph{source vector} formed by the $Q$ time-dependent sources amplitude, $\bm{n}_i(t) \in \bbR^{P}$ is a contamination due to noise. Note that the mixing matrix $\bm{A}^s$ and sources $\bm{s}_i$ are not known.

Following numerous learning models on M/EEG~\cite{blankertz-etal:08,dahne2014spoc,grosse-wentrup:08}, we consider a regression setting where the target $y_i$ is a function of the power of the sources, denoted $p_{i, j}=\bbE_t[s_{i, j}^2(t)]$. Here we consider the linear model:
\begin{equation}
    \label{eq:regmodel}
    y_i = \sum_{j=1}^Q \alpha_j f(p_{i,j}) \enspace ,
\end{equation}
where $\bm{\alpha}\in \bbR^Q$ and  $f: \bbR^+ \to \bbR$ is increasing.
Possible choices for $f$ that are relevant for neuroscience are
$f(x)=x$, or $f(x)=\log(x)$ to account for
log-linear relationships between brain signal power and cognition
~\citep{blankertz-etal:08,grosse-wentrup:08,buzsaki2014log}.
A first approach consists in estimating the sources before fitting such a linear model, for example using the Minimum Norm Estimator (MNE) approach~\cite{Hamalainen:1984}. This boils down to solving the so-called M/EEG inverse problem which requires costly MRI acquisitions and tedious processing~\cite{baillet2017}. A second approach is to work directly with the signals $\bm{X}_i$. To do so, models that enjoy some invariance property are desirable: these models are blind to the mixing $\bm{A}^s$ and working with the signals $\bm{x}$ is similar to working directly with the sources $\bm{s}$.
Riemannian geometry is a natural setting where such invariance properties are found~\cite{forstner2003metric}.
Besides, under Gaussian assumptions, model \eqref{eq:generativemodel} is fully
described by second order statistics~\citep{rodrigues2018multivariate}. This
amounts to working with covariance matrices, $\bCi = \bXi \bXi^{\top} / T$, for
which Riemannian geometry is well developed. One specificity of M/EEG data is,
however, that signals used for learning have been rank-reduced. This leads to rank-deficient covariance matrices, $\bCi \in \SprP$, for which specific matrix manifolds need to be considered.

\section{Theoretical background to model invariances on $\SprP$ manifold}
\label{sec:manifold}
\subsection{Riemannian matrix manifolds}
\begin{wrapfigure}{r}{0.4\textwidth}
\vspace{-0.4cm}
\hfill
\includegraphics[width=\linewidth]{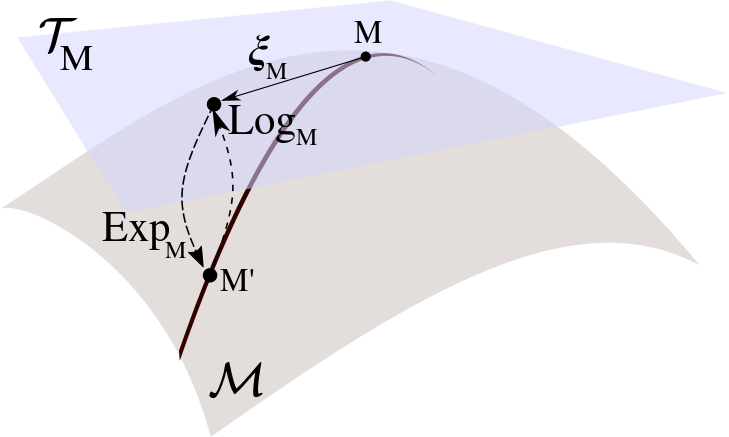}
\caption{Tangent Space, exponential and logarithm on Riemannian manifold illustration.}%
\label{fig:manifold}%
\end{wrapfigure}

Endowing a continuous set $\cM$ of square matrices with a metric, that defines a
local Euclidean structure, gives a Riemannian manifold with a solid theoretical framework.
Let $\bm{M}\in \cM$, a $K$-dimensional Riemannian manifold. For any matrix $\bm{M}' \in \cM$, as $\bm{M}' \to \bm{M}$, $ \bm{\xi_M} = \bm{M}' - \bm{M}$ belongs to a vector space
$\mathcal{T}_{\bm{M}}$ of dimension $K$ called the \emph{tangent space} at $\bm{M}$.

The \emph{Riemannian metric} defines an inner product $\langle \cdot ,\cdot \rangle_{\bm{M}}: \enspace \mathcal{T}_{\bm{M}} \times \mathcal{T}_{\bm{M}} \to \bbR$ for each tangent space $\mathcal{T}_{\bm{M}}$, and as a consequence a norm in the tangent space $\|\bm{\xi}\|_{\bm{M}}= \sqrt{\langle \bm{\xi} ,\bm{\xi} \rangle_{\bm{M}}}$.
Integrating this metric between two points gives a \emph{geodesic} distance $d: \cM \times \cM \rightarrow \bbR^+$.
It allows to define means on the manifold:
\begin{equation}
    \label{eq:average}
    \text{Mean}_d(M_1,\dots, M_N) = \arg\min_{M \in \cM} \sum_{i=1}^N d(M_i, M)^2 \enspace .
\end{equation}

The \emph{manifold exponential} at $\bm{M} \in \cM$, denoted $\text{Exp}_{\bm{M}}$, is a smooth mapping from $T_M$ to $\cM$ that preserves local properties. In particular,
$d(\text{Exp}_{\bm{M}}(\bm{\xi_{\bm{M}}}), \bm{M}) = \|\bm{\xi_{\bm{M}}}\|_{\bm{M}} +
o(\|\bm{\xi_{\bm{M}}}\|_{\bm{M}})$.
Its inverse is the \emph{manifold logarithm} $\text{Log}_{\bm{M}}$ from $\cM$ to $\mathcal{T}_{\bm{M}}$, with $\|\text{Log}_{\bm{M}}(\bm{M}')\|_{\bm{M}} = d(\bm{M}, \bm{M}') + o(d(\bm{M}, \bm{M}'))$ for $\bm{M}, \bm{M}' \in
\cM$.
Finally, since $\mathcal{T}_{\bm{M}}$ is Euclidean, there is a linear invertible mapping $ \phi_{\bm{M}}: \mathcal{T}_{\bm{M}} \to \bbR^K$ such that for all $\xi_{\bm{M}} \in \mathcal{T}_{\bm{M}}$, $\|\bm{\xi}_{\bm{M}}\|_{\bm{M}} = \|\phi_{\bm{M}}(\bm{\xi}_{\bm{M}})\|_2$.
This allows to define the \emph{vectorization operator} at $\bm{M} \in \cM$, $\mathcal{P}_{\bm{M}}: \cM \to \mathbb{R}^K$, defined by $ \mathcal{P}_{\bm{M}}(\bm{M'}) = \phi_{\bm{M}}(\text{Log}_{\bm{M}}(\bm{M}'))$.
Fig. \ref{fig:manifold} illustrates these concepts.

The vectorization explicitly captures the local Euclidean properties of the Riemannian manifold:
\begin{equation}
    \label{eq:isometry}
    d(\bm{M}, \bm{M'}) = \|\mathcal{P}_{\bm{M}}(\bm{M'})\|_2 + o(\|\mathcal{P}_{\bm{M}}(\bm{M'})\|_2)
\end{equation}
Hence, if a set of matrices $\bm{M}_1, \dots, \bm{M}_N$ is located in a small portion of the manifold, denoting $\overline{\bm{M}}= \text{Mean}_d(\bm{M}_1, \dots, \bm{M}_N)$, it holds:%
\begin{equation}
    \label{eq:distance_preservation}
    d(\bm{M}_i, \bm{M}_j) \simeq \|\mathcal{P}_{\overline{\bm{M}}}(\bm{M}_i) - \mathcal{P}_{\overline{\bm{M}}}(\bm{M}_j)\|_2
\end{equation}

For additional details on matrix manifolds, see ~\citep{absil2009optimization}, chap. 3.

\paragraph{Regression on matrix manifolds}

The vectorization operator is key for machine learning applications: it projects
points in $\cM$ on $\bbR^K$, and the distance $d$ on $\cM$ is approximated
by the distance $\ell_2$ on $\bbR^K$.
Therefore, those vectors can be used as input for any standard regression technique, which often assumes a Euclidean structure of the data.
More specifically, throughout the article, we consider the following regression pipeline. Given a training set of samples $\bm{M}_1, \dots, \bm{M}_N\in \cM$ and
target continuous variables $y_1,\dots, y_N \in \bbR$, we first compute the mean of the samples $\overline{\bm{M}}= \text{Mean}_d(\bm{M}_1, \dots, \bm{M}_N)$.
This mean is taken as the reference to compute the vectorization. After computing $\bm{v}_1, \dots, \bm{v}_N \in
\bbR^K$ as $\bm{v}_i=\mathcal{P}_{\overline{\bm{M}}}(\bm{M}_i)$, a linear regression technique (e.g. ridge regression) with parameters $\bm{\beta} \in \bbR^K$ can be employed assuming that $y_i \simeq \bm{v}_i^{\top}\bm{\beta}$.

\subsection{Distances and invariances on positive matrices manifolds}
We will now introduce two important distances: the geometric distance
on the manifold $\SpPP$ (also known as affine-invariant distance), and the Wasserstein distance on the manifold $\SprP$.
\paragraph{The geometric distance}
Seeking properties of covariance matrices that are invariant by linear transformation of the signal, leads to endow  
the positive definite manifold $S_P^{++}$ with the \emph{geometric} distance
\cite{forstner2003metric}:
\begin{equation} \label{eq:geom_dist}
d_G(\bm{S}, \bm{S'}) = \|\log(\bm{S}^{-1} \bm{S'}) \|_{F} =
\Bigg[\sum_{i=1}^{P} \log^{2} \lambda_{k} \Bigg]^{\frac12}
\end{equation}
where $\lambda_{k}, k=1\ldots P$ are the real eigenvalues of $\bm{S}^{-1} \bm{S'}$.
The affine invariance property writes:
\begin{equation} \label{eq:invariance}
\text{For } \bm{W} \text{ invertible, }
d_{G}(\bm{W}^{\top}
\bm{S} \bm{W}, \bm{W}^{\top} \bm{S}' \bm{W}) = d_{G}(\bm{S},
\bm{S}') \enspace.
\end{equation}
This distance gives a Riemannian-manifold structure to  $S_P^{++}$ with the inner product $ \langle \bm{P},\bm{Q} \rangle_{\bm{S}} = \text{Tr}(\bm{P}\bm{S}^{-1}\bm{Q}\bm{S}^{-1})$
\cite{forstner2003metric}.
The corresponding manifold logarithm at $\bm{S}$ is $\text{Log}_{\bm{S}}(\bm{S'}) = \bm{S}^{\frac12}
\log\big( \bm{S}^{-\frac12}\bm{S'} \bm{S}^{-\frac12}\big) \bm{S}^{\frac12}$
and the vectorization operator $\mathcal{P}_{\bm{S}}(\bm{S'})$ of $\bm{S'}$
\wrt $\bm{S}$: $\mathcal{P}_{\bm{S}}(\bm{S'}) =
\text{Upper}(\bm{S}^{-\frac12}\text{Log}_{\bm{S}}(\bm{S'})\bm{S}^{-\frac 12}) =
\text{Upper}(\log(\bm{S}^{-\frac12}\bm{S'}\bm{S}^{-\frac 12}))$, where
$\text{Upper}(\bm{M})\in \bbR^K$ is the vectorized upper-triangular part of
$\bm{M}$, with unit weights on the diagonal and $\sqrt{2}$ weights on the off-diagonal, and $K=P(P+1)/2$.

\paragraph{The Wasserstein distance} Unlike $\SpPP$, it is hard to endow
the $\SprP$ manifold with a distance that yields tractable or cheap-to-compute
logarithms~\citep{vandereycken2009embedded}.  This manifold is classically
viewed as $\SprP = \{\mathbf{Y}\mathbf{Y}^{\top}|\mathbf{Y} \in \bbR_*^{P \times R}\}$, where $\bbR_*^{P
\times R}$ is the set $P \times R$ matrices of rank $R$~\citep{journee2010low}.
This view allows to write $\SprP$ as a quotient manifold $\bbR_*^{P \times R} /
\mathcal{O}_R$, where $\mathcal{O}_R$ is the orthogonal group of size $R$.
This means that each matrix $\mathbf{Y}\mathbf{Y}^{\top} \in \SprP$ is identified with the set
$\{\mathbf{Y}\mathbf{Q}| \mathbf{Q} \in \mathcal{O}_R\}$.

It has recently been proposed~\citep{massart2018psd} to use the standard
Frobenius metric on the total space $\bbR_*^{P \times R}$.
This metric in the total space is equivalent to the \emph{Wasserstein} distance~\cite{bhatia2018bures} on $\SprP$:
\begin{equation}
  \label{eq:wasserstein}
    d_W(\bm{S}, \bm{S}') = \Big[\text{Tr}(\bm{S}) + \text{Tr}(\bm{S}') - 2
    \text{Tr}((\bm{S}^{\frac12}\bm{S}'\bm{S}^{\frac12})^\frac12)\Big]^\frac12
\end{equation}
This provides cheap-to-compute logarithms:
\begin{equation}
    \text{Log}_{\bm{Y}\bm{Y}^{\top}}(\bm{Y}'\bm{Y}'^{\top}) = \bm{Y}'\bm{Q}^* - \bm{Y} \in \bbR_*^{P \times R} \enspace,
\end{equation}
where $\bm{U}\bm{\Sigma}\bm{V}^{\top} = \bm{Y}^{\top} \bm{Y}'$ is a singular
value decomposition and $\bm{Q}^* = \bm{V}\bm{U}^{\top}$.
The vectorization operator is then given by $\mathcal{P}_{\bm{Y}\bm{Y}^{\top}}(\bm{Y}'\bm{Y}'^{\top}) = \vect(\bm{Y}'\bm{Q}^* - \bm{Y}) \in \bbR^{PR}$, where the $\vect$ of a matrix is the vector containing all its coefficients.

This framework offers closed form projections in the tangent space for
the Wasserstein distance, which can be used to perform regression.
Importantly, since $S_P^{++}=S_{P, P}^+$, we can also use this distance on the positive definite matrices. This distance possesses the \emph{orthogonal invariance} property:
\begin{equation} \label{eq:orth_invariance}
\text{For } \bm{W} \text{ orthogonal, }
d_{W}(\bm{W}^{\top}
\bm{S} \bm{W}, \bm{W}^{\top} \bm{S}' \bm{W}) = d_{W}(\bm{S},
\bm{S}') \enspace.
\end{equation}
This property is weaker than the affine invariance of the geometric
distance~(\ref{eq:invariance}).
A natural question is
whether such an affine invariant distance also exists on this manifold.
Unfortunately, it is shown in~\cite{bonnabel2009riemannian} that the answer is
negative for $R < P$ (proof in appendix \ref{subsec:noaff_inv_distance}).

\section{Manifold-regression models for M/EEG}

\subsection{Generative model and consistency of linear regression in the tangent space of $\SpPP$}
Here, we consider a more specific generative model than \eqref{eq:generativemodel} by assuming a specific structure on the noise. We assume that the additive noise $\bm{n}_i (t) = \bm{A}^n \bm{\nu}_i(t)$ with $\bm{A}^n = [\bm{a}^n_{1}, \ldots, \bm{a}^n_{P-Q}] \in \bbR^{P \times (P-Q)}$ and $\bm{\nu}_i(t) \in \bbR^{P-Q}$. This amounts to assuming that the noise is of rank $P-Q$ and that the noise spans the same subspace for all subjects.
Denoting $\bm{A} = [\bm{a}^s_{1}, \ldots, \bm{a}^s_{Q}, \bm{a}^n_{1}, \ldots, \bm{a}^n_{P-Q}] \in \bbR^{P \times P}$ and $\bm{\eta}_i(t) = [s_{i,1}(t), \dots s_{i, Q}(t), \nu_{i, 1}(t), \dots, \nu_{i, P-Q}(t)] \in \bbR^P$, this generative model can be compactly rewritten as $\bm{x}_i(t) = \bm{A}\bm{\eta}_i(t)$.

We assume that the sources $\bm{s}_i$ are decorrelated and independent from $\bm{\nu}_i$: with $p_{i, j}=\bbE_t[s_{i, j}^2(t)]$ the powers, \ie the variance over time, of the $j$-th source
of subject $i$, we suppose
$\bbE_t[\bm{s}_i(t)\bm{s}_i^{\top}(t)]= \diag((p_{i, j})_{j=1 \dots Q})$ and
$\bbE_t[\bm{s}_i(t)\bm{\nu}_i(t)^{\top}] = 0$. The covariances are then given by: 
\begin{equation}
\label{eq:covs}
    \bCi =\bm{A} \bm{E}_i\bm{A}^{\top} \enspace,
\end{equation}
where $\bm{E}_i = \bbE_t[\bm{\eta}_i(t)\bm{\eta}_i(t)^{\top}]$ is a 
block diagonal matrix, whose upper $Q\times Q$ block is $\diag(p_{i,1}, \dots, p_{i,Q})$.

%
In the following, we show that different functions $f$ from \eqref{eq:regmodel} yield a linear
relationship between the $y_i$'s and the vectorization of the $C_i$'s for
different Riemannian metrics.
\begin{prop}[Euclidean vectorization]
  \label{prop:consistency_eucl}
    Assume $f(p_{i,j}) = p_{i,j}$. 
    Then, the relationship between $y_i$ and $\text{Upper}(\bm{C}_i)$ is linear.
\end{prop}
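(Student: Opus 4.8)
The plan is to use the invertibility of the square mixing matrix $\bm{A}$ to express each source power $p_{i,j}$ as a single, subject-independent linear functional of $\bCi$, and then to push this linear relation through the $\text{Upper}$ operator, which is an isometry.

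First I would invert the covariance identity \eqref{eq:covs}. Since $\bm{A} \in \bbR^{P \times P}$ is assumed invertible (as required for $\bCi \in \SpPP$), we have $\bm{E}_i = \bm{A}^{-1} \bCi \bm{A}^{-\top}$. The block structure of $\bm{E}_i$ identifies the source powers with its first $Q$ diagonal entries, so writing $\bm{b}_j^\top$ for the $j$-th row of $\bm{A}^{-1}$, I get for every $j \leq Q$
\begin{equation}
p_{i,j} = (\bm{E}_i)_{jj} = \bm{b}_j^\top \bCi \bm{b}_j = \text{Tr}(\bm{b}_j \bm{b}_j^\top \bCi) \enspace .
\end{equation}
The key observation is that $\bm{b}_j$ does not depend on $i$: the same functional recovers $p_{i,j}$ for all subjects, and the subject-dependent lower (noise) block of $\bm{E}_i$ never enters because we read off only the signal diagonal.

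Next, taking $f$ the identity and substituting into \eqref{eq:regmodel} gives
\begin{equation}
y_i = \sum_{j=1}^Q \alpha_j p_{i,j} = \text{Tr}(\bm{B}_\alpha \bCi) \enspace, \qquad \bm{B}_\alpha := \sum_{j=1}^Q \alpha_j \bm{b}_j \bm{b}_j^\top \in \Sp \enspace,
\end{equation}
a fixed symmetric matrix. To finish, I would invoke the defining property of the $\sqrt{2}$-weighting: $\text{Upper}$ is an isometry between $(\Sp, \langle \cdot, \cdot\rangle_F)$ and $(\bbR^K, \langle \cdot, \cdot \rangle_2)$, so that $\text{Tr}(\bm{B}_\alpha \bCi) = \langle \text{Upper}(\bm{B}_\alpha), \text{Upper}(\bCi)\rangle_2$. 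Setting $\bm{\beta} = \text{Upper}(\bm{B}_\alpha)$ then yields $y_i = \bm{\beta}^\top \text{Upper}(\bCi)$, i.e. a linear relationship.

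The computation is mostly routine; the one point deserving care — and the crux of the statement — is the claim that a single weight matrix $\bm{B}_\alpha$ serves every subject despite the across-subject variation of the noise covariance. This holds precisely because demixing by the fixed $\bm{A}^{-1}$ isolates the source-power diagonal, so the functionals $\bm{b}_j \bm{b}_j^\top$ annihilate the variable noise block. I would therefore state the invertibility of $\bm{A}$ explicitly, since without it the powers $p_{i,j}$ cannot in general be recovered linearly from $\bCi$.
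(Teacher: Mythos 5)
Your proof is correct and takes essentially the same route as the paper's: rewrite Eq.~\eqref{eq:covs} as $\bm{E}_i = \bm{A}^{-1}\bm{C}_i\bm{A}^{-\top}$, read the powers $p_{i,j}$ off the diagonal of the upper block, and conclude that $y_i$ is linear in the entries of $\bm{C}_i$. Your write-up merely makes the paper's terse argument fully explicit, exhibiting the weight vector $\bm{\beta} = \text{Upper}(\bm{B}_\alpha)$ through the rank-one functionals $\bm{b}_j\bm{b}_j^{\top}$ and the isometry property of $\text{Upper}$, which is a valid (and arguably clearer) elaboration rather than a different approach.
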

\begin{proof}
Indeed, if $f(p) = p$, the relationship between $y_i$ and the $p_{i,j}$ is linear.
Rewriting Eq.~\eqref{eq:covs} as $\bm{E}_i=\bm{A}^{-1}\bm{C}_i\bm{A}^{-\top}$, and since the $p_{i, j}$ are on the diagonal of the upper block of $\bm{E}_i$, the relationship between the $p_{i, j}$ and the coefficients of $\bm{C}_i$ is also linear.
This means that there is a linear relationship between the coefficients of $\bm{C}_i$ and the variable of interest $y_i$.
In other words, $y_i$ is a linear combination of the vectorization of $\bm{C}_i$ \wrt the standard Euclidean distance.
\end{proof}
\begin{prop}[Geometric vectorization]
  \label{prop:consistency}
    Assume $f(p_{i,j}) = \log(p_{i,j})$. 
    Denote $\overline{\bm{C}} = \text{Mean}_G(\bm{C}_1,\dots, \bm{C}_N)$ the geometric mean of the dataset, and $ \bm{v}_i = \mathcal{P}_{\overline{\bm{C}}}(\bm{C}_i)$ the vectorization of $\bm{C}_i$ \wrt the geometric distance.
    Then, the relationship between $y_i$ and $\bm{v}_i$ is linear.
\end{prop}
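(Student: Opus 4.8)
The plan is to exploit the affine invariance of the geometric distance to change coordinates from the sensor space into the source space, where the problem trivializes because everything becomes (block-)diagonal. Concretely, since $\bCi = \bm{A}\bm{E}_i\bm{A}^{\top}$ with $\bm{E}_i$ block-diagonal and $f=\log$ gives $y_i = \sum_{j=1}^Q \alpha_j \log p_{i,j}$, I would show that the vectorization $\bm{v}_i$ depends \emph{affinely} on the vector $(\log p_{i,j})_j$ and that the underlying linear map is invertible; this lets me recover the $\log p_{i,j}$ as linear functionals of $\bm{v}_i$, and hence $y_i$ as a linear (affine) combination of $\bm{v}_i$, exactly as in Proposition~\ref{prop:consistency_eucl}.

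The two structural facts I would invoke are both consequences of the affine invariance~\eqref{eq:invariance}, i.e. that the congruence $\bm{S}\mapsto \bm{W}^{\top}\bm{S}\bm{W}$ is an isometry of $\SpPP$. First, the geometric mean is affine-equivariant, so writing $\overline{\bm{E}} = \text{Mean}_G(\bm{E}_1,\dots,\bm{E}_N)$ one has $\overline{\bm{C}} = \bm{A}\,\overline{\bm{E}}\,\bm{A}^{\top}$, and $\overline{\bm{E}}$ is again block-diagonal with source block $\diag(\bar p_1,\dots,\bar p_Q)$, where $\bar p_j = (\prod_i p_{i,j})^{1/N}$. Second, because isometries commute with $\text{Log}$ and their differential is the linear congruence itself, the manifold logarithm is equivariant: taking $\bm{W}=\bm{A}^{-\top}$ (so that $\bm{W}^{\top}\bCi\bm{W}=\bm{E}_i$ and $\bm{W}^{\top}\overline{\bm{C}}\bm{W}=\overline{\bm{E}}$) yields $\text{Log}_{\overline{\bm{C}}}(\bCi) = \bm{A}^{-1}\,\text{Log}_{\overline{\bm{E}}}(\bm{E}_i)\,\bm{A}^{-\top}$.

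Feeding this into the vectorization formula $\mathcal{P}_{\overline{\bm{C}}}(\bCi) = \text{Upper}(\overline{\bm{C}}^{-\frac12}\,\text{Log}_{\overline{\bm{C}}}(\bCi)\,\overline{\bm{C}}^{-\frac12})$ and setting $\bm{G} = \bm{A}^{-\top}\overline{\bm{C}}^{-\frac12}$ (a fixed, $i$-independent invertible matrix) collapses everything to $\bm{v}_i = \text{Upper}(\bm{G}^{\top}\,\text{Log}_{\overline{\bm{E}}}(\bm{E}_i)\,\bm{G})$, which is a fixed linear bijection $\Sp \to \bbR^K$ applied to $\text{Log}_{\overline{\bm{E}}}(\bm{E}_i)$. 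It then remains to compute $\text{Log}_{\overline{\bm{E}}}(\bm{E}_i) = \overline{\bm{E}}^{\frac12}\log(\overline{\bm{E}}^{-\frac12}\bm{E}_i\overline{\bm{E}}^{-\frac12})\overline{\bm{E}}^{\frac12}$: since $\bm{E}_i$ and $\overline{\bm{E}}$ share the block-diagonal structure with a \emph{diagonal} source block, the matrix logarithm acts blockwise, and on the source block it acts entrywise, giving $\diag\big(\bar p_j(\log p_{i,j} - \log \bar p_j)\big)_j$ — exactly linear in $(\log p_{i,j})_j$. Thus $\bm{v}_i$ is affine in $(\log p_{i,j})_j$; inverting the bijection recovers each $\log p_{i,j}$ as a linear functional of $\bm{v}_i$, and substituting into $y_i = \sum_j \alpha_j \log p_{i,j}$ gives the claimed linear relationship (with an intercept absorbing the constant terms $\log \bar p_j$).

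The crux of the argument — and the step I expect to require the most care — is precisely the reduction in the second paragraph: a direct attack on $\log(\overline{\bm{C}}^{-\frac12}\bCi\overline{\bm{C}}^{-\frac12})$ in sensor coordinates fails because the matrix logarithm does not split over non-commuting factors, so the dependence on $\log p_{i,j}$ looks hopelessly nonlinear. The whole proof hinges on the equivariance of $\text{Mean}_G$ and of $\text{Log}$ under congruences, which is what lets me diagonalize the source block; I would therefore state these equivariance properties carefully (as consequences of the congruence being an isometry of the affine-invariant metric) before performing the coordinate change. The noise block only contributes extra coordinates to $\bm{v}_i$ that carry zero regression weight, so no assumption on the noise covariance beyond the shared subspace is needed.
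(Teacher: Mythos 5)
Your proposal is correct and takes essentially the same route as the paper's proof: both use affine invariance to push the mean and the logarithm over to the source side, where everything is block-diagonal and the matrix logarithm acts entrywise on the source block, so that $\bm{v}_i$ is a fixed linear bijection applied to a matrix that is affine in $(\log p_{i,j})_j$ (the paper realizes the $\text{Log}$-equivariance concretely by exhibiting the orthogonal matrix $\overline{\bm{U}}=\overline{\bm{C}}^{\frac12}\bm{A}^{-\top}\overline{\bm{E}}^{-\frac12}$ and commuting $\log$ through it, while you invoke the abstract isometry property of the congruence --- equivalent in substance). One slip to fix: with $\bm{W}=\bm{A}^{-\top}$ the equivariance reads $\text{Log}_{\overline{\bm{E}}}(\bm{E}_i)=\bm{A}^{-1}\,\text{Log}_{\overline{\bm{C}}}(\bm{C}_i)\,\bm{A}^{-\top}$, i.e. $\text{Log}_{\overline{\bm{C}}}(\bm{C}_i)=\bm{A}\,\text{Log}_{\overline{\bm{E}}}(\bm{E}_i)\,\bm{A}^{\top}$ (your inverses sit on the wrong side), which merely changes your fixed matrix $\bm{G}$ from $\bm{A}^{-\top}\overline{\bm{C}}^{-\frac12}$ to $\bm{A}^{\top}\overline{\bm{C}}^{-\frac12}$ --- still $i$-independent and invertible, so the rest of your argument goes through unchanged.
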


The proof is given in appendix \ref{subsec:proof_prop}.  It relies crucially on the affine invariance
property that means that using Riemannian embeddings of the $\bm{C}_i$'s, is equivalent to working directly with the $\bm{E}_i$'s.

\begin{prop}[Wasserstein vectorization]
  \label{prop:consistency_wass}
    Assume $f(p_{i,j}) = \sqrt{p_{i,j}}$. 
    Assume that $\bm{A}$ is orthogonal.
    Denote $\overline{\bm{C}} = \text{Mean}_W(\bm{C}_1,\dots, \bm{C}_N)$ the Wasserstein mean of the dataset, and $ \bm{v}_i = \mathcal{P}_{\overline{\bm{C}}}(\bm{C}_i)$ the vectorization of $\bm{C}_i$ \wrt the Wasserstein distance.
    Then, the relationship between $y_i$ and $\bm{v}_i$ is linear.
\end{prop}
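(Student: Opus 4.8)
The plan is to exploit the orthogonal invariance (\ref{eq:orth_invariance}) exactly as the affine invariance is used for Proposition~\ref{prop:consistency}, namely to transport everything back to the source covariances $\bm{E}_i$, where the target is transparent. First I would observe that, since $\bm{A}$ is orthogonal, the change of variables $\bm{C} = \bm{A}\bm{M}\bm{A}^{\top}$ is a distance-preserving bijection of $\SpPP$ onto itself: for any $\bm{M}$, $d_W(\bm{C}_i, \bm{A}\bm{M}\bm{A}^{\top}) = d_W(\bm{A}\bm{E}_i\bm{A}^{\top}, \bm{A}\bm{M}\bm{A}^{\top}) = d_W(\bm{E}_i, \bm{M})$ by (\ref{eq:orth_invariance}) and (\ref{eq:covs}). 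Plugging this into the variational definition (\ref{eq:average}) of the Wasserstein mean, and using uniqueness of the Wasserstein barycenter of positive definite matrices, I get the equivariance $\overline{\bm{C}} = \bm{A}\,\overline{\bm{E}}\,\bm{A}^{\top}$, where $\overline{\bm{E}} = \text{Mean}_W(\bm{E}_1,\dots,\bm{E}_N)$.

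Next I would show that the vectorization is, up to a fixed orthogonal linear map, the vectorization of the $\bm{E}_i$. Writing $\bm{E}_i = \bm{E}_i^{1/2}\bm{E}_i^{1/2}$ with the symmetric PSD square root, I take the factors $\bm{Y}_i = \bm{A}\bm{E}_i^{1/2}$ and $\overline{\bm{Y}} = \bm{A}\,\overline{\bm{E}}^{1/2}$, so that $\bm{C}_i = \bm{Y}_i\bm{Y}_i^{\top}$ and $\overline{\bm{C}} = \overline{\bm{Y}}\,\overline{\bm{Y}}^{\top}$. Because $\bm{A}^{\top}\bm{A} = \bm{I}_P$ and the square roots are symmetric, the matrix whose SVD defines $\bm{Q}^*$ is $\overline{\bm{Y}}^{\top}\bm{Y}_i = \overline{\bm{E}}^{1/2}\bm{E}_i^{1/2}$, which is \emph{free of $\bm{A}$}; hence the optimal rotation $\bm{Q}_i^*$ is exactly the one attached to the pair $(\overline{\bm{E}},\bm{E}_i)$. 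Substituting into the Wasserstein logarithm gives $\text{Log}_{\overline{\bm{Y}}\,\overline{\bm{Y}}^{\top}}(\bm{Y}_i\bm{Y}_i^{\top}) = \bm{Y}_i\bm{Q}_i^* - \overline{\bm{Y}} = \bm{A}\big(\bm{E}_i^{1/2}\bm{Q}_i^* - \overline{\bm{E}}^{1/2}\big)$, i.e. $\bm{A}$ times the logarithm computed in the source coordinates. Since $\vect(\bm{A}\bm{M}) = (\bm{I}_P \otimes \bm{A})\vect(\bm{M})$ and $\bm{I}_P \otimes \bm{A}$ is an invertible (orthogonal) linear map, $\bm{v}_i = (\bm{I}_P\otimes\bm{A})\,\bm{w}_i$, where $\bm{w}_i = \vect(\bm{E}_i^{1/2}\bm{Q}_i^* - \overline{\bm{E}}^{1/2})$. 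A linear relationship between $y_i$ and $\bm{w}_i$ therefore transfers verbatim to $\bm{v}_i$.

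It then remains to read off the entries of $\bm{w}_i$ corresponding to the source powers. The crucial structural fact is that $\overline{\bm{E}}$ is block diagonal with a \emph{diagonal} upper $Q\times Q$ block. I would prove this by symmetry: each $\bm{E}_i$ is invariant under conjugation by any $\bm{J} = \diag(\bm{D},\bm{I}_{P-Q})$ with $\bm{D}$ a diagonal sign matrix (its upper block $\diag(p_{i,j})$ commutes with $\bm{D}$ and the cross blocks vanish), so by orthogonal invariance and uniqueness of the mean, $\bm{J}\,\overline{\bm{E}}\,\bm{J} = \overline{\bm{E}}$ for all such $\bm{J}$, forcing the cross blocks of $\overline{\bm{E}}$ to be zero and its upper block to be diagonal, say $\diag(\overline{p}_1,\dots,\overline{p}_Q)$. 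Consequently $\overline{\bm{E}}^{1/2}$ and $\bm{E}_i^{1/2}$ are block diagonal, and the upper block of $\overline{\bm{E}}^{1/2}\bm{E}_i^{1/2}$ is $\diag(\sqrt{\overline{p}_j\,p_{i,j}})$, a diagonal PSD matrix whose (canonical) SVD has identity singular-vector factors. Performing the SVD block-wise makes $\bm{Q}_i^*$ block diagonal with identity upper block, so the upper block of $\bm{E}_i^{1/2}\bm{Q}_i^* - \overline{\bm{E}}^{1/2}$ is $\diag(\sqrt{p_{i,j}} - \sqrt{\overline{p}_j})$. Hence $\bm{w}_i$ contains the entries $\sqrt{p_{i,j}} - \sqrt{\overline{p}_j}$, and with $f(p)=\sqrt{p}$ we get $y_i = \sum_j \alpha_j\sqrt{p_{i,j}} = \sum_j \alpha_j(\sqrt{p_{i,j}}-\sqrt{\overline{p}_j}) + \text{const}$, an affine function of $\bm{w}_i$, hence of $\bm{v}_i$ (the constant being absorbed by the regression intercept).

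I expect the main obstacle to be the second and third steps: controlling the factorization ambiguity $\bm{Y}\mapsto \bm{Y}\bm{Q}$ and the rotation $\bm{Q}_i^*$ that enters the Wasserstein logarithm. The delicate point is to argue cleanly that the block-diagonal structure of $\overline{\bm{E}}$ (and the diagonal, PSD form of the relevant block of $\overline{\bm{E}}^{1/2}\bm{E}_i^{1/2}$) forces $\bm{Q}_i^*$ to act as the identity on the source block, so that the source logarithm is genuinely $\diag(\sqrt{p_{i,j}}-\sqrt{\overline{p}_j})$ rather than some rotated mixture; one must also check that the orthogonal ambiguity of the SVD (when some products $\overline{p}_j p_{i,j}$ coincide) does not affect the well-defined tangent vector. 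The remaining computations are routine once the equivariance $\overline{\bm{C}} = \bm{A}\,\overline{\bm{E}}\,\bm{A}^{\top}$ and the block structure are in place.
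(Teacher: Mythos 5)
Your proof is correct and follows essentially the same route as the paper's: orthogonal invariance gives $\overline{\bm{C}} = \bm{A}\overline{\bm{E}}\bm{A}^{\top}$, the factors $\bm{Y}_i = \bm{A}\bm{E}_i^{1/2}$ and $\overline{\bm{Y}} = \bm{A}\overline{\bm{E}}^{1/2}$ make $\overline{\bm{Y}}^{\top}\bm{Y}_i = \overline{\bm{E}}^{1/2}\bm{E}_i^{1/2}$ free of $\bm{A}$, hence $\bm{Q}_i^*$ is block diagonal with identity upper block and the tangent vector $\bm{Y}_i\bm{Q}_i^* - \overline{\bm{Y}}$ is affine in the $\sqrt{p_{i,j}}$. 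Your write-up additionally supplies details the paper only asserts (the sign-conjugation/uniqueness argument for the block-diagonal structure of $\overline{\bm{E}}$, the explicit $\vect(\bm{A}\bm{M}) = (\bm{I}_P \otimes \bm{A})\vect(\bm{M})$ reduction, and the observation that uniqueness of the polar factor of a nonsingular PSD block resolves the SVD ambiguity you flag), all of which check out.
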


The proof is given in appendix \ref{subsec:proof_prop_wass}. The restriction to the case where $A$ is orthogonal stems from the orthogonal invariance of the Wasserstein distance.
In the neuroscience literature square root rectifications are however not commonly used for
M/EEG modeling. Nevertheless, it is interesting to see that the Wasserstein metric that can naturally cope with rank reduced data is consistent with this particular generative model.

These propositions show that the relationship between the samples and
the variable $y$ is linear in the tangent space, motivating the use of
linear regression methods (see simulation study in Sec.~\ref{sec:experiments}).
%
%
The argumentation of this section relies on the assumption that the covariance matrices are 
full rank. However, this is rarely the case in practice.

\subsection{Learning projections on $\SrPP$}
\label{sec:spatialfilter}



In order to use the geometric distance on the $\bm{C}_i \in \SprP$, we have to project them on $\SrPP$ to make them full rank.
In the following, we consider a linear operator $\bm{W} \in \bbR^{P \times R}$ of rank $R$ which is common to all samples (\ie subjects). For consistency with the M/EEG literature we will refer to rows of $\bm{W}$ as \emph{spatial filters}.
The covariance matrices of `spatially filtered' signals $\bm{W}^{\top}\bm{x}_i$ are obtained as: $ \bm{\Sigma}_{i} = \bm{W}^{\top} \bCi \bm{W} \in
\bbR^{R \times R}$.
With probability one, $\rank(\bm{\Sigma}_{i}) = \min(\rank(\bm{W}), \rank(\bCi)) = R$, hence $\bm{\Sigma}_{i} \in S_{R}^{++}$.
Since the $\bm{C}_i$'s do not span the same image, applying $\bm{W}$
 destroys some information.
Recently, geometry-aware dimensionality reduction techniques, both supervised and
unsupervised, have been developed
on covariance manifolds~\citep{horev2016geometry, harandi2017dimensionality}.
Here we considered two distinct approaches to estimate $\bm{W}$.
\paragraph{Unsupervised spatial filtering}
A first strategy is to project the data into a subspace that captures
most of its variance. This is achieved by Principal Component Analysis (PCA)
applied to the averaged covariance matrix computed across subjects: $ \bm{W}_{\text{UNSUP}} =
\bm{U}$, where $\bm{U}$ contains the eigenvectors corresponding
to the top $R$ eigenvalues of the average covariance matrix $\overline{\bm{C}}
= \frac{1}{N} \sum_{i=1}^{N} \bCi$. This step is blind to the values of $y$ and is therefore unsupervised.
Note that under the assumption that the time series across subjects are independent, the average covariance  $\overline{\bm{C}}$ is the covariance of the data over the full population.
\paragraph{Supervised spatial filtering}
We use a supervised spatial filtering algorithm \cite{dahne2014spoc} originally
developed for intra-subject Brain Computer Interfaces applications, and adapt it to our cross-person prediction problem.
The filters $\bm{W}$ are chosen to maximize the covariance between the power of
the filtered signals and $y$. Denoting by $\bm{C}_y = \frac{1}{N}
\sum_{i=1}^{N} y_{i} \bCi$ the weighted average covariance
matrix, the first filter $\bm{w}_{\text{SUP}}$ is given by:
$$
\bm{w}_{\text{SUP}} = \argmax_{\bm{w}} \frac{\bm{w}^{\top} \bm{C}_y
\bm{w}}{\bm{w}^{\top} \overline{\bm{C}} \bm{w}} \enspace .
$$
In practice, all the other filters in $\bm{W}_{\text{SUP}}$
are obtained by solving a generalized eigenvalue
decomposition problem (see the proof in Appendix ~\ref{subsec:appendix_spoc}).

The proposed pipeline is summarized in Fig.~\ref{fig:pipeline}.

\begin{figure}[tb]
	\begin{center}
        \includegraphics[width=\linewidth]{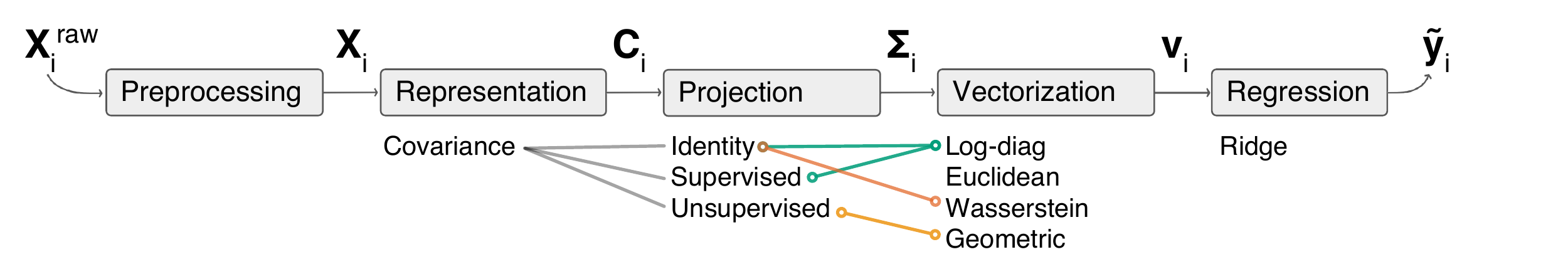}
	\end{center}
        \caption{Proposed regression pipeline. The considered choices
	for each sequential step are detailed below each box. Identity means no spatial filtering
	$\bm{W} = \bm{I}$. Only the most relevant combinations are reported. For example
	Wasserstein vectorization does not need projections as it
	directly applies to rank-deficient matrices. Geometric vectorization
    is not influenced by the choice of projections due to its affine-invariance. Choices for vectorization are depicted by the colors used for visualizing subsequent analyses.}
    \label{fig:pipeline}
\end{figure}




\section{Experiments}
\label{sec:experiments}

\subsection{Simulations}

We start by illustrating Prop.~\ref{prop:consistency}.
Independent identically distributed covariance matrices $\bm{C}_1, \dots, \bm{C}_N \in S_P^{++}$ and variables $y_1,
\dots, y_N$ are generated following the above generative model.
The matrix $\bm{A}$ is taken as $\exp(\mu \bm{B})$ with $\bm{B} \in \bbR^{P\times P}$ a random matrix, and $\mu \in \bbR$ a scalar controlling the distance from $A$ to identity ($\mu=0$ yields $\bm{A}= \bm{I}_P$).
We use the $\log$ function for $f$ to link the source powers (\ie the variance) to the $y_i$'s. Model reads $y_i = \sum_j \alpha_j \log(p_{ij}) + \varepsilon_i$, with $\varepsilon_i \sim \mathcal{N}(0, \sigma^2)$ a small additive random perturbation.

We compare three methods of vectorization: the geometric distance, the
Wasserstein distance and the non-Riemannian method ``log-diag'' extracting the $\log$ of the diagonals of $\bm{C}_i$ as features. Note that the diagonal of $\bm{C}_i$ contains the powers of each sensor for subject $i$.
%
%
A linear regression model is used following the procedure presented in Sec.~\ref{sec:manifold}.
We take $P=5$, $N=100$ and $Q = 2$.
We measure the score of each method as the average mean absolute error (MAE) obtained with 10-fold cross-validation.
Fig.~\ref{fig:synth_expe} displays the scores of each method when the parameters $\sigma$ controlling the noise level and $\mu$ controlling the distance from $A$ to $I_p$ are changed.
We also investigated the realistic scenario where each subject has a mixing matrix
deviating from a reference: $\bm{A}_i = \bm{A} + \bm{E}_i$ with entries of
$\bm{E}_i$ sampled \iid from $\mathcal{N}(0, \sigma^2)$.

The same experiment with $f(p) = \sqrt{p}$ yields comparable results, yet with Wasserstein distance performing best and achieving perfect out-of-sample prediction when $\sigma \to 0$ and $A$ is orthogonal.
\begin{figure}[h!]
    \centering
    \begin{minipage}{\linewidth}
        \begin{minipage}{0.34\linewidth}
            \includegraphics[width=1\linewidth]{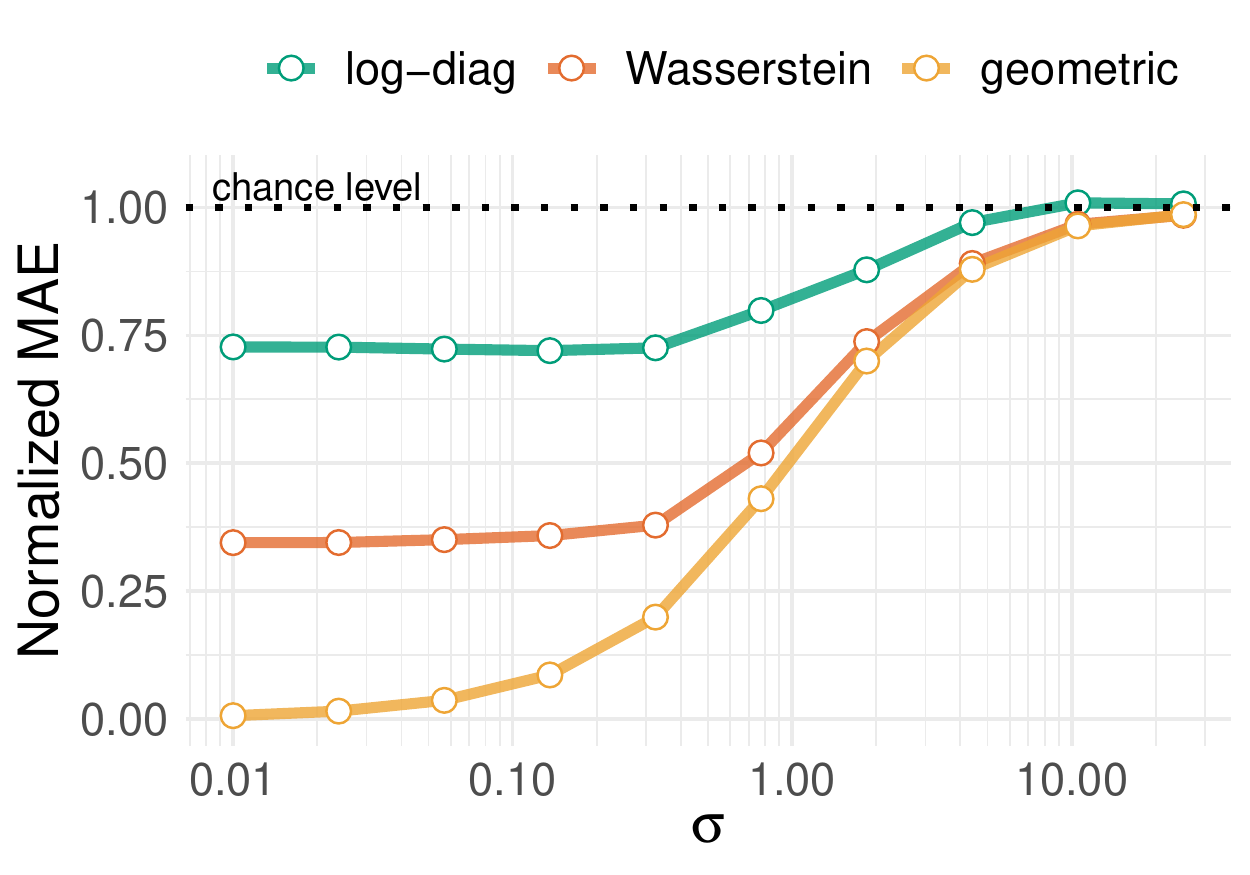}
        \end{minipage}
        \begin{minipage}{0.325\linewidth}
            \includegraphics[width=1\linewidth]{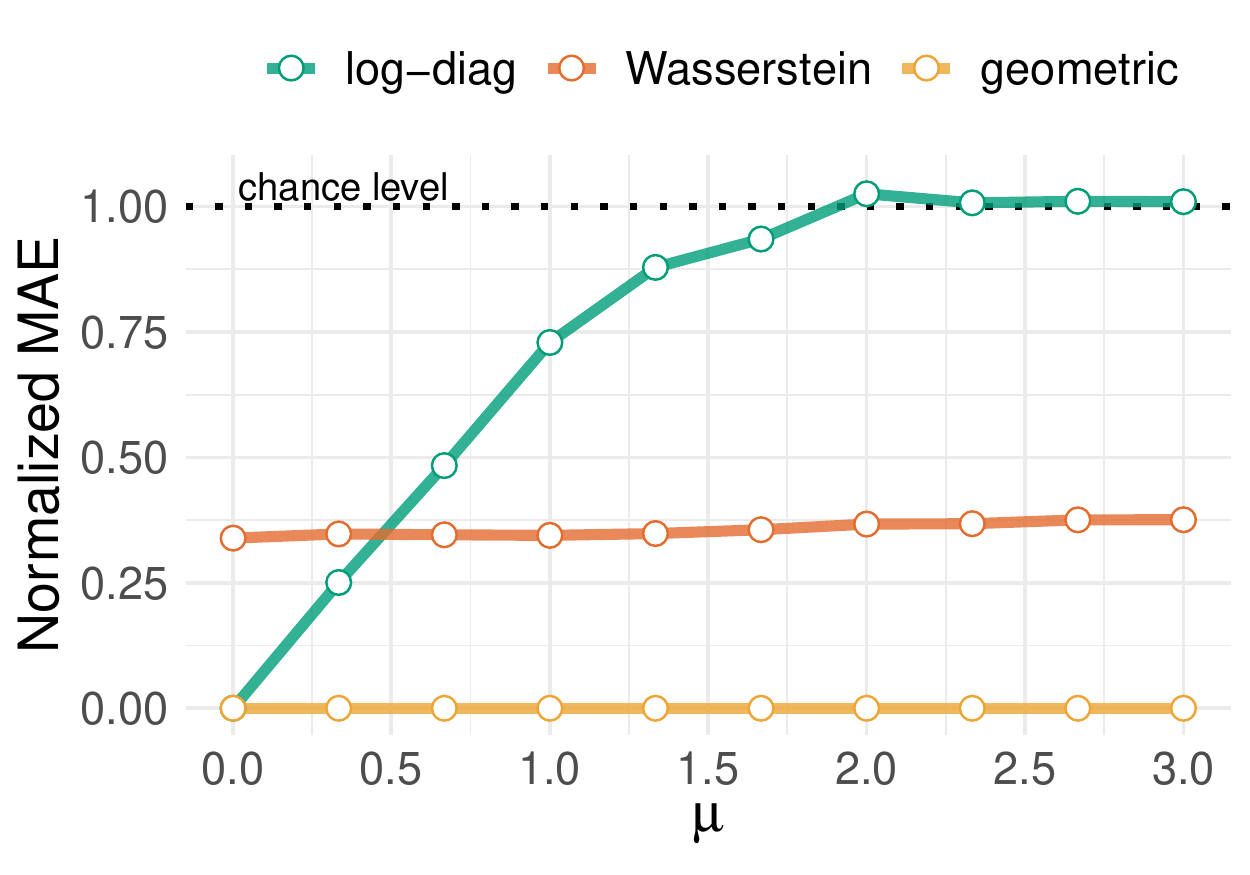}
        \end{minipage}
        \begin{minipage}{0.325\linewidth}
            \includegraphics[width=1\linewidth]{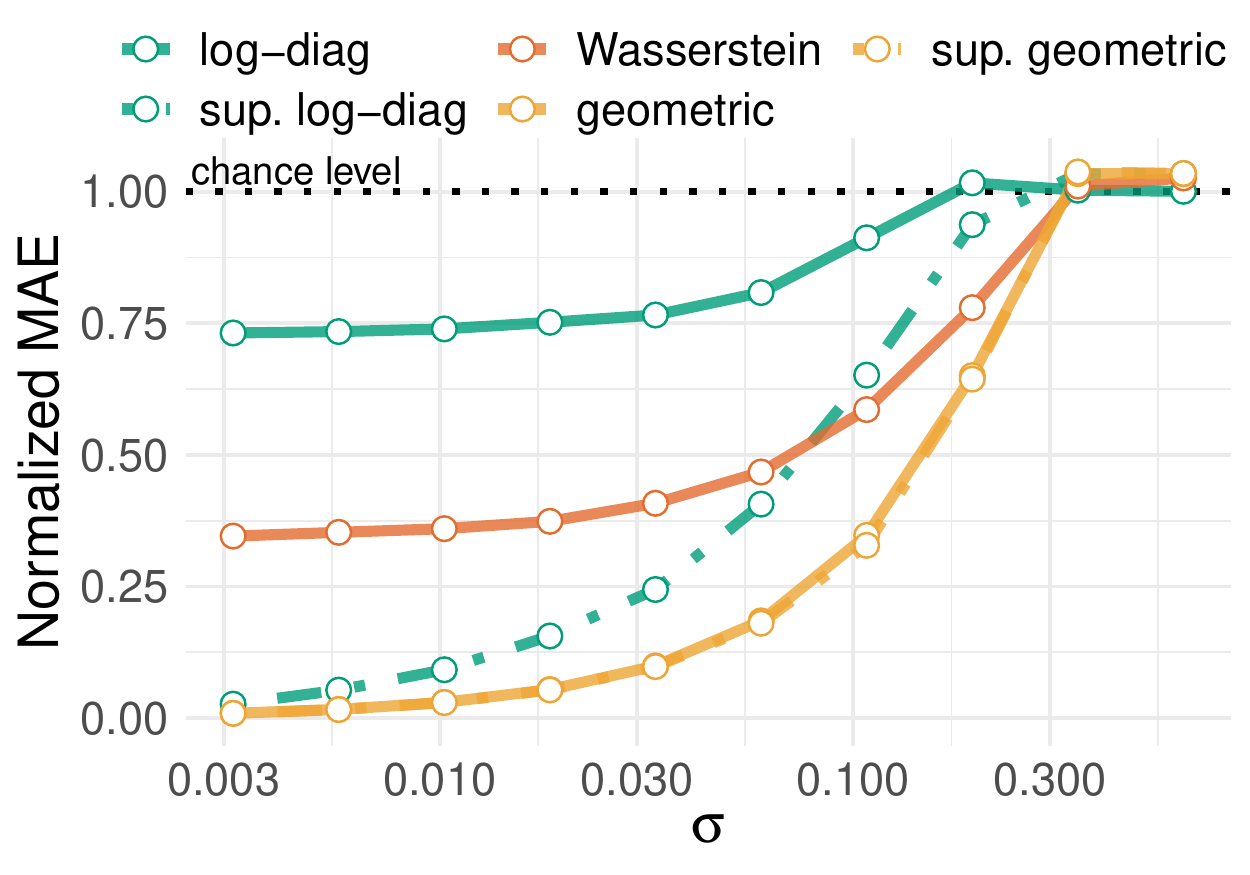}
        \end{minipage}
    \caption{Illustration of Prop.\ref{prop:consistency}.     
    Data is generated following the generative model with $f = \log$.  The
	    regression pipeline consists in projecting the data in the tangent
	    space, and then use a linear model.
	    The left plot shows the evolution of the score when random noise of
	    variance $\sigma^2$ is added to the variables $y_i$.  The MAE of
	    the geometric distance pipeline goes to $0$ in the limit of no
	    noise, indicating perfect out-of-sample prediction.  This
	    illustrates the linearity in the tangent space for the geometric
	    distance (Prop.~\ref{prop:consistency}).
	    The middle plot explores the effect of the parameter $\mu$
	    controlling the distance between $A$ and $I_P$.  Riemannian
	    geometric method is not affected by $\mu$ due to its affine
	    invariance property. Although the Wasserstein distance is not
	    affine invariant, its performance does not change much with $\mu$.
	    On the contrary, the log-diag method is sensitive to changes in
	    $A$.
	    The right plot shows how the score changes when mixing matrices
	    become sample dependent. We can see then only when $\sigma=0$
	    supervised + log-diag and Riemann reach perfect performance.
	    Geometric Riemann is uniformly better and indifferent to
	    projection choice. Wasserstein, despite model mismatch, outperforms
	    supervised + log-diag with high $\sigma$.}
    \label{fig:synth_expe}
    \end{minipage}
\end{figure}

\subsection{MEG data }
\paragraph{Predicting biological age from MEG on the Cambridge center of ageing dataset} In the following, we apply our methods to infer age from brain signals. Age is a dominant driver of cross-person variance in neuroscience data and a serious confounder~\cite{smith2018statistical}. As a consequence of the globally increased average lifespan, ageing has become a central topic in public health that has stimulated neuropsychiatric research at large scales. The link between age and brain function is therefore of utmost practical interest in neuroscientific research.

To predict age from brain signals, here we use the currently largest publicly available MEG dataset provided by the Cam-CAN~\cite{shafto2014cambridge}. We only considered the signals from magnetometer sensors ($P=102$) as it turns out that once SSS is applied (detailed in
Appendix \ref{subsec:appendix_preproc}), magnetometers and gradiometers are linear combination of approximately 70 signals ($65 \leq R_{i} \leq 73$), which become redundant
in practice~\cite{garces2017choice}.
%
We considered task-free recordings during which participants were asked to sit
still with eyes closed in the absence of systematic stimulation. We then drew $T \simeq 520,000$ time samples from $N=595$ subjects. 
To capture age-related changes in cortical brain rhythms~\cite{berthouze2010human,voytek2015age, clark2004spontaneous}, we filtered the data into $9$ frequency bands: low frequencies $[0.1 - 1.5]$,
$\delta[1.5 - 4]$, $\theta[4 - 8]$, $\alpha[8 - 15]$, $\beta_{low}[15 - 26]$,
$\beta_{high}[26 - 35]$, $\gamma_{low}[35 - 50]$, $\gamma_{mid}[50 - 74]$ and $\gamma_{high}[76 - 120]$ (Hz unit). These frequencies are compatible with conventional definitions used in the Human Connectome Project
\cite{larson2013adding}.
We verify that the covariance matrices all lie on a small portion of the manifold, justifying projection in a common tangent space.
Then we applied the covariance pipeline independently in each frequency band and concatenated the ensuing features.

\paragraph{Data-driven covariance projection for age prediction}

Three types of approaches are here compared: Riemannian methods (Wasserstein or geometric), methods extracting log-diagonal of matrices (with or without supervised spatial filtering, see Sec.~\ref{sec:spatialfilter}) and a biophysics-informed method based on the MNE source imaging technique~\cite{Hamalainen:1984}. The MNE method essentially consists in a standard Tikhonov regularized inverse solution and is therefore linear (See Appendix~\ref{subsec:appendix_mne} for details). Here it serves as gold-standard informed by the individual anatomy of each subject. It requires a T1-weighted MRI and the precise measure of the head in the MEG device coordinate system~\cite{baillet2017} and the coordinate alignment is hard to automate.
We configured MNE with $Q=8196$ candidate dipoles.
To obtain spatial smoothing and reduce dimensionality, we averaged the MNE solution using a cortical parcellation encompassing 448 regions of interest from~\cite{khan2018maturation,mne}. 
%
%
We then used ridge regression and tuned its regularization parameter by generalized cross-validation~\cite{Golub_Wahba:79} on a logarithmic grid of $100$ values in $[10^{-5}, 10^3]$ on each training fold of a 10-fold cross-validation loop.
%
%
All numerical experiments were run using the Scikit-Learn
software~\cite{sklearn}, the MNE software for processing M/EEG data~\cite{mne}
and the PyRiemann package~\cite{Congedo:2013}.
We also ported to Python some part of the Matlab code of Manopt toolbox~\cite{manopt} for computations involving Wasserstein distance.
The proposed method, including all data preprocessing, applied on the 500GB of raw MEG data from the Cam-CAN dataset, runs in approximately 12~hours on a regular desktop computer with at least 16GB of RAM. The preprocessing for the computation of the covariances is embarrassingly parallel and can therefore be significantly accelerated by using multiple CPUs. The actual predictive modeling can be performed in less than a minute on standard laptop.
Code used for data analysis can be found on GitHub\footnote{
\href{https://www.github.com/DavidSabbagh/NeurIPS19\_manifold-regression-meeg}{https://www.github.com/DavidSabbagh/NeurIPS19\_manifold-regression-meeg}}.

\paragraph{Riemannian projections are the leading data-driven methods}

\begin{figure}[h!]
    \begin{minipage}{\linewidth}
        \begin{minipage}{0.58\linewidth}
        \includegraphics[width=\linewidth]{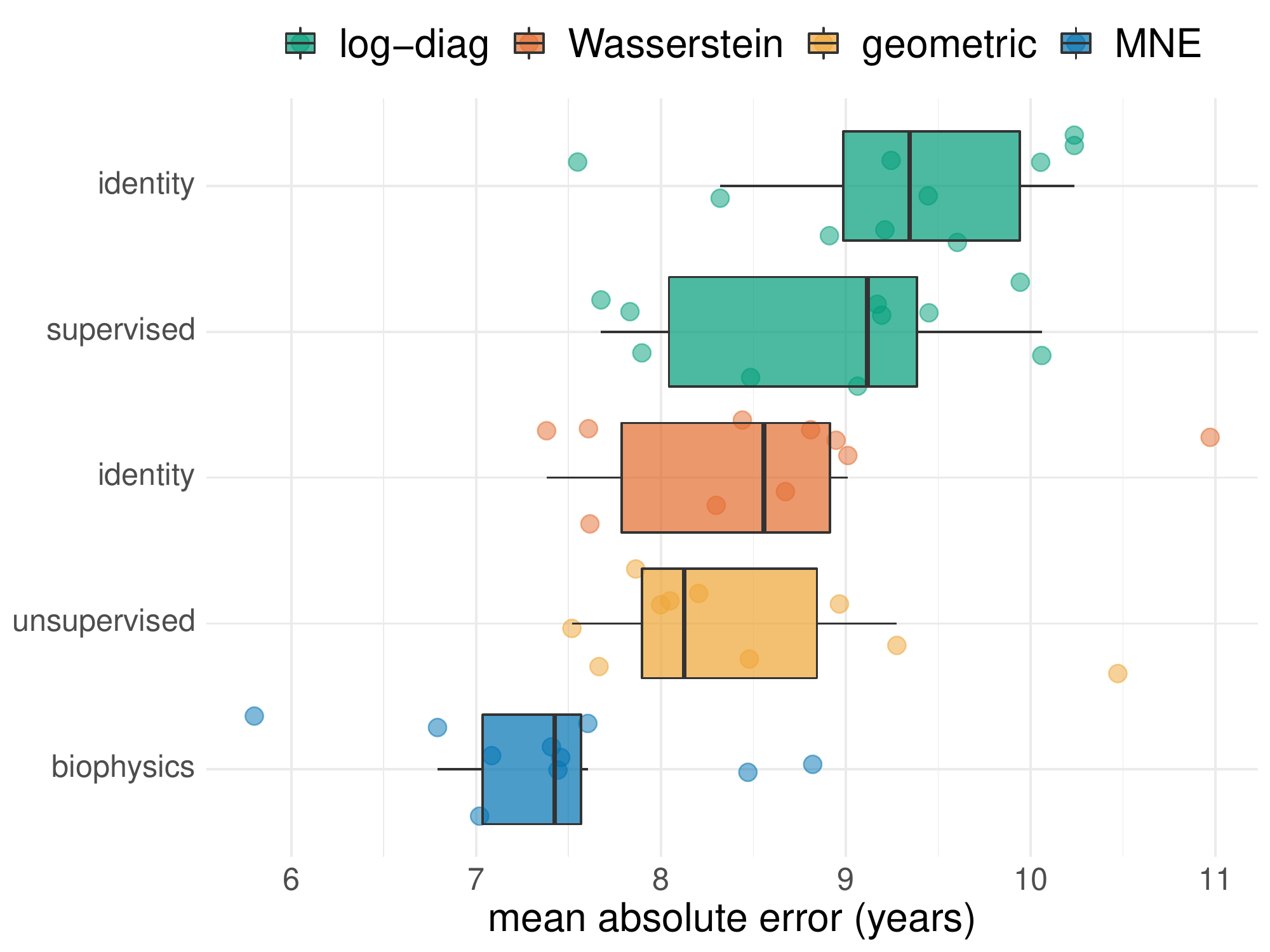}
        \end{minipage}
        \begin{minipage}{0.4\linewidth}
        \caption{Age prediction on Cam-CAN MEG dataset for different methods, ordered by out-of-sample MAE. The y-axis depicts the projection method, with identity denoting the absence of projection. Colors indicate the subsequent embedding. The biophysics-driven MNE method (blue) performs best. The Riemannian methods (orange) follow closely and their performance depends little on the projection method. The non-Riemannian methods log-diag (green) perform worse, although the supervised projection clearly helps.}
        \label{fig:real_expe}
        \end{minipage}
    \end{minipage}
\end{figure}

Fig.~\ref{fig:real_expe} displays the scores for each method. 
The biophysically motivated MNE projection yielded the best performance (7.4y MAE), closely followed by the purely data-driven Riemannian methods (8.1y MAE). The chance level was 16y MAE.
%
Interestingly, the Riemannian methods give similar results, and outperformed the non-Riemannian methods.
When Riemannian geometry was not applied, the projection strategy turned out to
be decisive.
Here, the supervised method performed best: it reduced the dimension of the problem while preserving the age-related variance.

Rejecting a null-hypothesis that differences between models are due to chance would require several
independent datasets. Instead, for statistical inference, we considered uncertainty estimates of paired
differences using 100 Monte Carlo splits (10\% test set size).
For each method, we counted how often it was performing better than
the baseline model obtained with identity and log-diag.
We observed for supervised log-diag 73\%,
identity Wasserstein 85\%, unsupervised geometric 96\% and biophysics 95\% improvement over baseline. This suggests that
inferences will carry over to new data.

Importantly, the supervised spatial filters and MNE both support model inspection, which is not the case for the two Riemannian methods. Fig. \ref{fig:meg_patterns} depicts the marginal patterns~\cite{haufe2014interpretation} from the supervised filters and the source-level ridge model, respectively. The sensor-level results suggest predictive dipolar patterns in the theta to beta range roughly compatible with generators in visual, auditory and motor cortices. Note that differences in head-position can make the sources appear deeper than they are (distance between the red positive and the blue negative poles). Similarly, the MNE-based model suggests localized predictive differences between frequency bands highlighting auditory, visual and premotor cortices. While the MNE model supports more exhaustive inspection, the supervised patterns are still physiologically informative. For example, one can notice that the pattern is more anterior in the $\beta$-band than the $\alpha$-band, potentially revealing sources in the motor cortex.

\begin{figure}[h!]
    \begin{minipage}{\linewidth}
        \begin{minipage}{0.68\linewidth}
        \includegraphics[width=\linewidth]{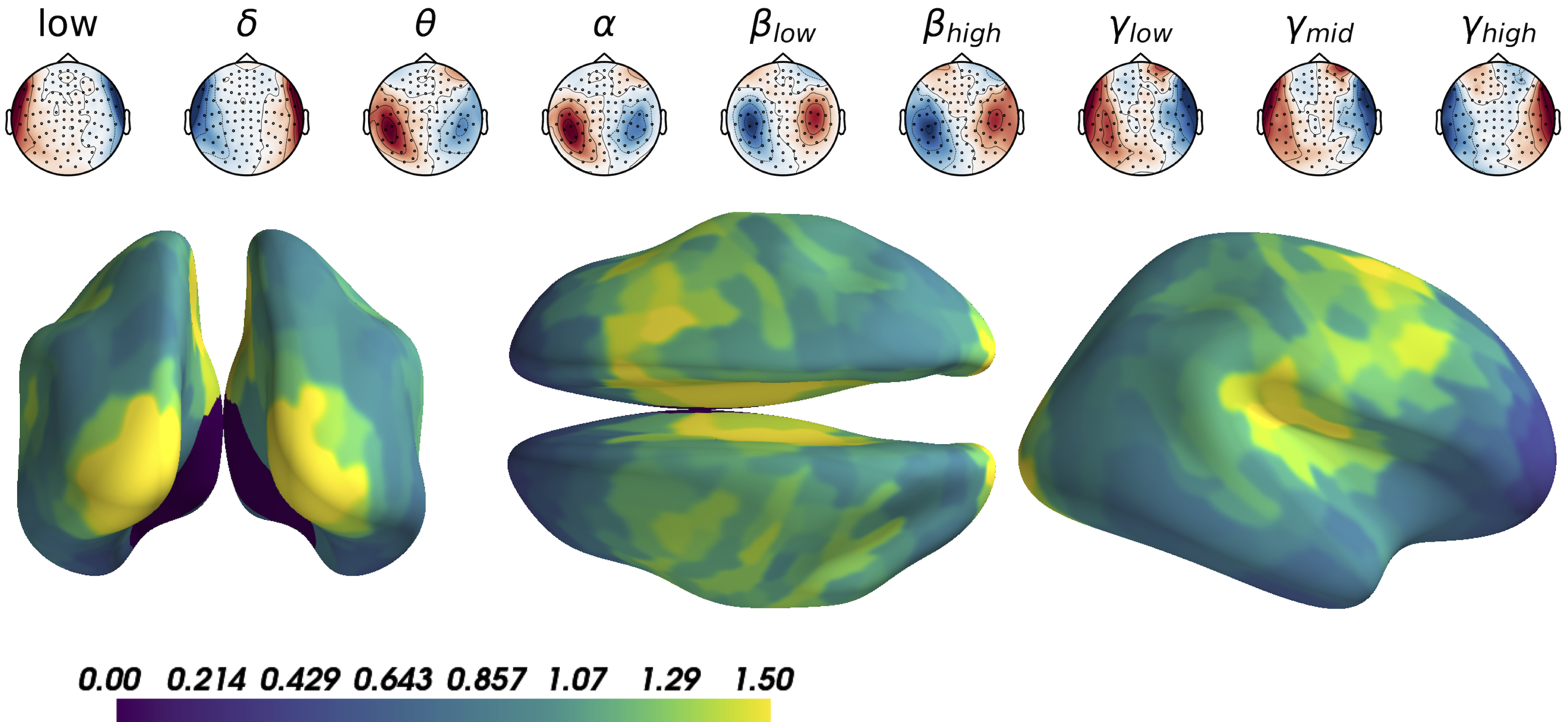}
        \end{minipage}
        \hspace{0.3em}
        \begin{minipage}{0.3\linewidth}
        \caption{Model inspection. Upper panel: sensor-level patterns from supervised projection. One can notice dipolar configurations varying across frequencies. Lower panel: standard deviation of patterns over frequencies from MNE projection highlighting bilateral visual, auditory and premotor cortices.
        } \label{fig:meg_patterns}
        \end{minipage}
    \end{minipage}
\end{figure}

\section{Discussion}

In this contribution, we proposed a mathematically principled approach for
regression on rank-reduced covariance matrices from M/EEG data. We applied this
framework to the problem of inferring age from neuroimaging data, for which we
made use of the currently largest publicly available MEG dataset. 
To the best of our knowledge, this is the first study to apply
a covariance-based approach coupled with Riemannian geometry
to regression problem in which the target is defined across persons and not
within persons (as in brain-computer interfaces).
%
%
Moreover, this study reports the first benchmark of age prediction from MEG
resting state data on the Cam-CAN. Our results demonstrate that Riemannian
data-driven methods do not fall far behind the gold-standard methods with
biophysical priors, that depend on manual data processing. 
One limitation of Riemannian methods is, however, their interpretability compared to
other models that allow to report brain-region and frequency-specific effects. These results suggest a trade-off between performance and explainability. Our study suggests that the Riemannian methods have the potential to support automated large-scale analysis of M/EEG data in the absence of MRI scans. Taken together, this potentially opens new avenues for biomarker development. 




\section*{Acknowledgement}{}
This work was supported by a 2018 \emph{``m\'edecine num\'erique'' (for digital Medicine)} thesis grant issued by Inserm (French national institute of health and medical research) and Inria (French national research institute for the digital sciences). It was also partly supported by the European
Research Council Starting Grant SLAB ERC-YStG-676943.

\bibliography{references_camera_ready}

\newpage
\section{Appendix} \label{sec:appendix}

\subsection{Proof of proposition~\ref{prop:consistency}} \label{subsec:proof_prop}

First, we note that by invariance, $\overline{\bm{C}} = \text{Mean}_G(\bm{C}_1, \dots, \bm{C}_N) = \bm{A} \text{Mean}_G(\bm{E}_1, \dots, \bm{E}_N)\bm{A}^{\top} = \bm{A}\overline{\bm{E}}\bm{A}^{\top}$, where $\overline{\bm{E}}$ has the same block diagonal structure as the $\bm{E}_i$'s, and $\overline{\bm{E}}_{jj} =  (\prod_{i=1}^Np_{i, j})^{\frac{1}{N}}$ for $j \leq Q$.
Denote $\overline{\bm{U}} = \overline{\bm{C}}^{\frac12}\bm{A}^{-\top}\overline{\bm{E}}^{-\frac12}$.
By simple verification, we obtain $\overline{\bm{U}}^{\top}\overline{\bm{U}} = I_P$, i.e. $\overline{\bm{U}}$ is orthogonal.

Furthermore, we have:
$$
\overline{\bm{U}}^{\top}\overline{\bm{C}}^{-\frac12}\bm{C}_i\overline{\bm{C}}^{-\frac12}\overline{\bm{U}} = \overline{\bm{E}}^{-\frac12}\bm{E}_i\overline{\bm{E}}^{-\frac12} \enspace .
$$
It follows that for all $i$, 
$$
\overline{\bm{U}}^{\top}\log(\overline{\bm{C}}^{-\frac12}\bm{C}_i\overline{\bm{C}}^{-\frac12})\overline{\bm{U}} = \log(\overline{\bm{E}}^{-\frac12}\bm{E}_i\overline{\bm{E}}^{-\frac12})
$$
Note that $\log(\overline{\bm{E}}^{-\frac12}\bm{E}_i\overline{\bm{E}}^{-\frac12})$ shares the same structure as the $\bm{E}_i$'s, and that $\log(\overline{\bm{E}}^{-\frac12}\bm{E}_i\overline{\bm{E}}^{-\frac12})_{jj} = \log(\frac{p_{i,j}}{\bar{p}_{j}})$. for $j \leq Q$.

Therefore, the relationship between $ \log(\overline{\bm{C}}^{-\frac12}\bm{C}_i\overline{\bm{C}}^{-\frac12})$ and the $\log(p_{i,j})$ is linear.

Finally, since $\bm{v}_i = \text{Upper}(\log(\overline{\bm{C}}^{-\frac12}\bm{C}_i\overline{\bm{C}}^{-\frac12}))$, the relationship between the $\bm{v}_i$'s and the $\log(p_{i,j})$ is linear, and the result holds.

\subsection{Proof of proposition~\ref{prop:consistency_wass}} \label{subsec:proof_prop_wass}

First, we note that $\bm{C}_i = \bm{A}\bm{E}_i\bm{A}^{\top} \in \SpPP = \mathcal{S}_{P,P}^{+}$ so it can be decomposed as $\bm{C}_i = \bm{Y}_i \bm{Y}_i^{\top}$ with $\bm{Y}_i =\bm{A}\bm{E}_i^{\frac12}$. 

By orthogonal invariance, $\overline{\bm{C}} = \text{Mean}_W(\bm{C}_1, \dots, \bm{C}_N) = \bm{A} \text{Mean}_W(\bm{E}_1, \dots, \bm{E}_N)\bm{A}^{\top} = \bm{A}\overline{\bm{E}}\bm{A}^{\top}$, where $\overline{\bm{E}}$ so has the same block diagonal structure as the $\bm{E}_i$'s, and $\overline{\bm{E}}_{jj} =  (\sum_i \sqrt{p_{ij}})^2$ for $j \leq Q$. $\overline{\bm{C}}$ is also decomposed as $\overline{\bm{C}} = \overline{\bm{Y}} \overline{\bm{Y}}^{\top}$ with $\overline{\bm{Y}}=\bm{A}\overline{\bm{E}}^{\frac12}$.

Further, $\bm{Q}_i^* = \bm{V}_i\bm{U}_i^{\top}$ with $\bm{U}_i$ and $\bm{V}_i$ coming from the SVD of $\overline{\bm{Y}}^{\top} \bm{Y}_i = \overline{\bm{E}}^{\frac12} \bm{E}_i^{\frac12}$ which has the same structure as the $\bm{E}_i$'s. Therefore $\bm{Q}_i^*$ has also the same structure with the identity matrix as its upper block. 

Finally we have $\bm{v}_i = \mathcal{P}_{\overline{\bm{C}}}(\bm{C}_i) = 
\vect(\bm{Y}_i \bm{Q}_i^* - \overline{\bm{Y}})$ so it is linear in $\sqrt{(p_{i,j})}$ for $j \leq Q$.

\subsection{Proof that there is no continuous affine invariant distance on $S_{P, R}^+$ if $R < P$} \label{subsec:noaff_inv_distance}

We show the result for $P=2$ and $R=1$; the demonstration can straightforwardly be extended to the other cases.
The proof, from~\cite{bonnabel2009riemannian}, is by contradiction.

Assume that $d$ is a continuous invariant distance on $S_{2, 1}^+ $.
Consider $\bm{A} = \begin{pmatrix} 
1 & 0 \\
0 & 0 
\end{pmatrix}$ and $\bm{B} = \begin{pmatrix} 
1 & 1 \\
1 & 1 
\end{pmatrix}$, both in $S_{2, 1}^+$.
For $\varepsilon > 0$, consider the invertible matrix $\bm{W}_{\varepsilon} = \begin{pmatrix} 
1 & 0 \\
0 & \varepsilon 
\end{pmatrix}$.

We have: $\bm{W}_{\varepsilon}\bm{A}\bm{W}_{\varepsilon}^{\top} = \bm{A}$, and $\bm{W}_{\varepsilon}\bm{B}\bm{W}_{\varepsilon}^{\top} = \begin{pmatrix} 
1 & \varepsilon \\
\varepsilon & \varepsilon^2 
\end{pmatrix}$.

Hence, as $\varepsilon$ goes to $0$, we have $\bm{W}_{\varepsilon}\bm{B}\bm{W}_{\varepsilon}^{\top} \to \bm{A}$

Using affine invariance, we have: 
$$ d(\bm{A}, \bm{B}) = d(\bm{W}_{\varepsilon}\bm{A}\bm{W}_{\varepsilon}^{\top}, \bm{W}_{\varepsilon}\bm{B}\bm{W}_{\varepsilon}^{\top}) \enspace$$

Letting $\varepsilon \to 0$  and using continuity of $d$ yields $d(\bm{A}, \bm{B}) = d(\bm{A}, \bm{A}) = 0$, which is absurd since $\bm{A} \neq \bm{B}$.

\subsection{Supervised Spatial Filtering} \label{subsec:appendix_spoc}
We assume that the signal $\bm{x}(t)$ is band-pass filtered in one of frequency band of interest, so that for each subject the band power of signal is approximated by the variance over time of the signal.
We denote the expectation $\bbE$ and the variance $\VAR$ over time $t$ or subject $i$ by a corresponding subscript.

The source extracted by a spatial filter $\bm{w}$ for subject $i$ is $\bm{\widehat{s}}_i =
\bm{w}^{\top}\bm{x}_i(t)$. Its power reads:
\begin{align*}
    \Phi_{i}^{\bm{w}} &= \VAR_{t}[\bm{w}^{\top}\bm{x}_{i}(t)] =
    \bbE_{t}[\bm{w}^{\top}\bm{x}_{i}(t) \bm{x}_{i}^{\top}(t) \bm{w}] = \bm{w}^{\top} \bm{C}_i \bm{w}
\end{align*}
and its expectation across subjects is given by:
\begin{align*}
\bbE_{i}[\Phi_{i}^{\bm{w}}] &= \bm{w}^{\top} \bbE_{i}[\bm{C}_i]
    \bm{w}  = \bm{w}^{\top} \overline{\bm{C}} \bm{w} \enspace,
\end{align*}
where $\overline{\bm{C}} = \frac{1}{N} \sum_{i} \bm{C}_i$ is the average covariance matrix across subjects. Note that here, $\bm{C}_i$ refers to the covariance of the $\bm{x}_{i}$ and not its estimate as in Sec.~\ref{sec:spatialfilter}.

We aim to maximize the covariance between the target $y$ and the power of the sources, $\COV_{i}[\Phi_{i}^{\bm{w}}, y_{i}]$. This quantity is affected by the scaling of its arguments. To address this, the target variable $y$ is normalized:
$$
    \bbE_{i}[y_{i}] = 0 \quad \VAR_{i}[y_{i}] = 1 \enspace .
$$
Following \citep{dahne2014spoc}, to also scale $\Phi_{i}^{\bm{w}}$ we constrain its expectation to be 1:
$$
    \bbE_{i}[\Phi_{i}^{\bm{w}}] = \bm{w}^{\top} \overline{\bm{C}} \bm{w} = 1
$$
The quantity one aims to maximize reads:
\begin{align*}
        \COV_{i}[\Phi_{i}^{\bm{w}}, y_{i}] &= \bbE_{i}[\
        (\Phi_{i}^{\bm{w}} - \bbE_{i}[\Phi_{i}^{\bm{w}}])\ (y_{i} -
        \bbE_{i}[y_{i}])\ ]\\
        &= \bm{w}^{\top} \bbE_{i}[\bm{C}_i y_{i}] \bm{w}  - \bm{w}^{\top} \overline{\bm{C}} \bm{w} \bbE_{i}[y_{i}]\\
        &= \bm{w}^{\top} \bm{C}_{y} \bm{w}
\end{align*}
where $\bm{C}_{y} = \frac{1}{N} \sum_{i} y_{i} \bm{C}_i$.\\
\\
Taking into account the normalization constraint we obtain:
\begin{align}
    \label{eq:objconst}
    \widehat{\bm{w}} = \argmax_{\bm{w}^{\top} \overline{\bm{C}} \bm{w} = 1} \bm{w}^{\top} \bm{C}_{y} \bm{w} \enspace .
\end{align}
The Lagrangian of \eqref{eq:objconst} reads $F(\bm{w}, \lambda) = \bm{w}^{\top}
\bm{C}_{y} \bm{w} + \lambda (1 - \bm{w}^{\top} \overline{\bm{C}} \bm{w})$.
Setting its gradient \wrt $\bm{w}$ to $0$ yields a generalized
eigenvalue problem:
\begin{align}
    \nabla_{\bm{w}} F(\bm{w}, \lambda) = 0 &
        \implies \Sigma_{y} \bm{w} = \lambda
            \overline{\Sigma_{\bm{x}}} \bm{w} \label{eq:eigen1}
\end{align}
Note that \eqref{eq:objconst} can be also written as a generalized Rayleigh quotient:
\begin{align*}
    \widehat{\bm{w}} = \argmax_{\bm{w}}
        \frac{
            \bm{w}^{\top} \bm{C}_{y} \bm{w}
        }{
            \bm{w}^{\top} \overline{\bm{C}} \bm{w}
        } \enspace .
\end{align*}
Equation \eqref{eq:eigen1} has a unique closed-form solution called the generalized
eigenvectors of $(\bm{C}_{y}, \overline{\bm{C}})$. The second derivative gives:
\begin{align}
    \nabla_{\bm{\lambda}} F(\bm{w}, \lambda) = 0 &
        \implies \lambda = \bm{w}^{\top}
            \Sigma_{y} \bm{w} = \COV_{i}[\Phi_{i}^{\bm{w}}, y_{i}] \label{eq:eigen2}
\end{align}
Equation \eqref{eq:eigen2} leads to an interpretation of $\lambda$ as the covariance between $\Phi^{\bm{w}}$ and $y$, which should be maximal.
As a consequence, $\bm{W}_{\text{SUP}}$ is built from the generalized eigenvectors of
Eq.\eqref{eq:eigen1}, sorted by decreasing eigenvalues.

\subsection{MNE-based spatial filtering} \label{subsec:appendix_mne}

Let us denote $\bm{G} \in \bbR^{P \times Q}$ the instantaneous mixing matrix that relates the sources in the brain to the MEG/EEG measurements. This forward operator matrix is obtained by solving numerically Maxwell's equations after specifying a geometrical model of the head, typically obtained using an anatomical MRI image~\cite{hari2017meg}. Here $Q \geq P$ corresponds to the number of candidate sources in the brain. The MNE approach~\cite{Hamalainen:1984} offers a way to solve the inverse problem. MNE can be seen as Tikhonov regularized estimation, also similar to a ridge regression in statistics. Using such problem formulation the sources are obtained from the measurements with a linear operator which is given by:
$$
    \bm{W}_{\text{MNE}} = \bm{G}^\top(\bm{G}\bm{G}^\top + \lambda \bm{I}_P)^{-1} \in \bbR^{Q \times P} \enspace .
$$
The rows of this linear operator $\bm{W}_{\text{MNE}}$ can be seen also as spatial filters
that are mapped to specific locations in the brain. These are the filters used in Fig.~\ref{fig:real_expe}, using the implementation from~\cite{mne}.

\subsection{Preprocessing} \label{subsec:appendix_preproc}
Typical brain's magnetic fields detected by MEG are in the order of 100
femtotesla ($1 fT = 10^{-15}$ T) which is \textasciitilde $10^{-8}$ times the
strength of the earth's steady magnetic field.  That is why MEG recordings are
carried out inside special magnetically shielded rooms (MSR) that eliminate or
at least dampen external ambient magnetic disturbances. 

To pick up such tiny magnetic fields sensitive sensors have to be used \cite{hari2017meg}.
Their extreme sensitivity is challenged by many
electromagnetic nuisance sources (any moving metal objects like cars or
elevators) or electrically powered instruments generating magnetic induction
that is orders of magnitude stronger than the brain's. Their influence can be
reduced by combining magnetometers coils (that directly record the magnetic
field) with gradiometers coils (that record the gradient of the magnetic field
in certain directions). Those gradiometers, arranged either in a radial or
tangential (planar) way, record the gradient of the magnetic field towards 2
perpendicular directions hence inherently greatly emphasize brain signals with
respect to environmental noise.

Even though the magnetic shielded room and gradiometer coils can help to reduce
the effects of external interference signals the problem mainly remains and
further reduction is needed.  Also additional artifact signals can be caused by
movement of the subject during recording if the subject has small magnetic
particles on his body or head.  The Signal Space Separation (SSS) method can help mitigate those problems~\cite{taulu}.

\paragraph{Signal Space Separation (SSS)}

The Signal Space Separation (SSS) method \cite{taulu}, also called Maxwell
Filtering, is a biophysical spatial filtering method that aim to produce
signals cleaned from external interference signals and from
movement distortions and artifacts.

A MEG device records the neuromagnetic field distribution by sampling the field
simultaneously at P distinct locations around the subject’s head.  At each
moment of time the measurement is a vector $\bm{x} \in \bbR^P$ is the total
number of recording channels.

In theory, any direction of this vector in the signal space represents a valid
measurement of a magnetic field, however the knowledge of the location of
possible sources of magnetic field, the geometry
of the sensor array and electromagnetic theory (Maxwell's equations and the
quasistatic approximation) considerably constrain the
relevant signal space and allow us to differentiate between signal space
directions consistent with a brain's field and those that are not.

To be more precise, it has been shown that the recorded magnetic field is a
gradient of a harmonic scalar potential.  A harmonic potential $V(\bm{r})$ is a
solution of the Laplacian differential equation $\nabla^{2} V = 0 $, where
$\bm{r}$ is represented by its spherical coordinates $(r,\theta,\psi)$.  It has
been shown that any harmonic function in a three-dimensional space can be
represented as a series expansion of spherical harmonic functions
$Y_{lm}(\theta,\phi)$:
\begin{equation} \label{eq:harmonic}
    V(\bm{r}) = \sum_{l=1}^{\infty} \sum_{m=-l}^{l} \alpha_{lm}
    \frac{Y_{lm}(\theta,\phi)}{r^{l+1}} + \sum_{l=1}^{\infty} \sum_{m=-l}^{l}
    \beta_{lm}
    r^{l} Y_{lm}(\theta,\phi)
\end{equation}

We can separate this expansion into two sets of functions: those proportional
to inverse powers of $r$ and those proportional to powers of $r$.
From a given array of sensors and a coordinate system with
its origin somewhere inside of the helmet, we can compute the
signal vectors corresponding to each of the terms in \ref{eq:harmonic}.

Following notations of \cite{taulu}, let $\bm{a}_{lm}$ be the signal vector corresponding to term
$\frac{Y_{lm}(\theta,\phi)}{r^{l+1}}$ and $\bm{b}_{lm}$ the signal
vector corresponding to $r^{l}Y_{lm}(\theta,\phi)$. A set of $P$ such
signal vectors forms a basis
in the $P$ dimensional signal space, and hence, the signal
vector is given as

\begin{equation}
    \bm{x} = \sum_{l=1}^{\infty} \sum_{m=-l}^{l} \alpha_{lm}
    \bm{a}_{lm} + \sum_{l=1}^{\infty} \sum_{m=-l}^{l} \beta_{lm}
    \bm{b}_{lm}
\end{equation}

This basis is not orthogonal, but linearly independent so any measured
signal vector has a unique representation in this basis:
\begin{equation}
        \bm{x} = \left[
                \bm{S_{in}} \quad \bm{S_{out}} \right]
            \begin{bmatrix}\bm{x}_{in}\\\bm{x}_{out}\end{bmatrix}
\end{equation}

where the sub-bases $ \bm{S}_{in}$ and $ \bm{S}_{out}$ contain
the basis vectors
$ \bm{a}_{lm}$ and $ \bm{b}_{lm}$, and vectors
$ \bm{x}_{in}$ and $ \bm{x}_{out}$ contain the corresponding
$\alpha_{lm}$ and $\beta_{lm}$ values.

It can be shown that the spherical harmonic functions contain
increasingly higher spatial frequencies when going to higher index values (l,m)
so that the signals from real magnetic sources are mostly contained in the low
$l,m$ end of the spectrum. By discarding the high $l,m$ end of the spectrum we
thus reduce the noise.  Then we can do signal space separation. It can be
shown that the basis vectors corresponding to the terms in the second sum in
expansion ~\eqref{eq:harmonic} represent the perturbating sources external to the helmet.
We can than separate the components of field arising from sources inside and
outside of the helmet. By discarding them we are left with the part of the
signal coming from inside of the helmet only.  The signal vector $\bm{x}$ is
then decomposed into 2 components $\bm{\phi}_{in}$ and $\bm{\phi}_{out}$ with
$\bm{\phi}_{in} = \bm{S}_{in} \bm{x}_{in}$ reproducing in all the MEG channels the signals that would be
seen if no interference from sources external to the helmet existed.


The real data from the Cam-CAN dataset have been measured with an Elekta
Neuromag 306-channel device, the only one that has been extensively tested on
Maxwell Filtering. For this device we included components up to
$l = L_{in} = 8$ for the $\bm{S}_{in}$ basis, and up to $l = L_{out} = 3$ for
the $ \bm{S}_{out}$ basis.

SSS requires a comprehensive sampling (more than about 150 channels) and a
relatively high calibration accuracy that is machine/site-specific.  For this
purpose we used the fine-calibration coefficients and the cross-talk correction
information provided in the Can-CAM repository for the 306-channels Neuromag
system used in this study.

For this study we used the temporal SSS (tSSS) extension~\cite{taulu},
where both temporal and spatial projection are applied to the MEG data. We used
an order 8 (resp. 3) of internal (resp. external) component of spherical
expansion, a 10s sliding window, a correlation threshold of 98\% (limit between
inner and outer subspaces used to reject overlapping intersecting inner/outer
signals), basis regularization, no movement compensation.

The origin of internal and external multipolar moment space is fitted via
head-digitization hence specified in the 'head' coordinate frame and the median
head position during the 10s window is used.

After projection in the lower-dimensional SSS basis we project back the signal
in its original space producing a signal $\bm{X}^{clean} = \bm{S}_{in}^{\top}
\bm{S}_{in} \bm{X} \in \bbR^{P \times T}$ with a much better SNR (reduced noise
variance) but with a rank $R \leq P$.
As a result each reconstructed sensor is then a linear
combination of $R$ synthetic source signals, which modifies the
inter-channel correlation structure, rendering the covariance matrix
rank-deficient.


\paragraph{Signal Space Projection (SSP)}

Recalling the MEG generative model \eqref{eq:generativemodel} if one knows, or can estimate, K linearly independent source patterns $\bm{a}_{1}, \ldots, \bm{a}_{K}$ that span the space $S = \textrm{span}(\bm{a}_{1}, \ldots, \bm{a}_{K}) \subset \bbR^P$ that contains the brain signal, one can estimate an orthonormal basis $U_{K} \in \bbR^{P \times K}$ of $S$ by singular value decomposition (SVD). One can then project any sensor space signal $\bm{x} \in \bbR^P$ onto $S$ to improve the SNR. The projection reads:
$$
    U_{K}U_{K}^{\top}\bm{x} \enspace .
$$

This is the idea behind the
Signal Space Projections (SSP) method \cite{uusitalo1997signal}.
In practice SSP is used to reduce physiological artifacts (eye blinks and heart beats) that cause prominent artifacts in the recording. In the Cam-CAN dataset eye blinks
are monitored by 2 electro-oculogram (EOG channels), and heart beats by an
electro-cardiogram (ECG channel).

SSP projections are computed from time segments contaminated by the artifacts
and the first component (per artifact and sensor type) are projected out.  More
precisely, the EOG and ECG channels are used to identify the artifact events
(after a first band-pass filter to remove DC offset and an additional [1-10]Hz
filter applied only to EOG channels to remove saccades vs blinks). After
filtering the raw signal in [1-35]Hz band, data segments (called epochs) are
created around those events, rejecting those whose peak-to-peak amplitude
exceeds a certain global threshold (see section below). For each artifact and
sensor type those epochs are then averaged and the first component of maximum
variance is extracted via PCA. Signal is then projected in the orthogonal space. This follows the guidelines of the MNE software~\cite{mne}.

\paragraph{Marking bad data segments}

We epoch the resulting data in 30s non overlapping windows and identify bad data segments (i.e. trials containing transient jumps in isolated channels) that have a peak-to-peak amplitude exceeding a certain global threshold, learnt automatically from the data using the autoreject (global) algorithm \cite{jas2017autoreject}.

%

\end{document}